\newtheorem{theorem}{Theorem}
\newtheorem{lemma}[theorem]{Lemma}
\newtheorem{proposition}[theorem]{Proposition}
\theoremstyle{definition}
\newtheorem{definition}[theorem]{Definition}
\newtheorem{remark}[theorem]{Remark}
\newtheorem{algorithm}{Algorithm}
\newtheorem{assumption}{Assumption}
\newcommand{\rr}{\mathbb{R}}
\newcommand{\nn}{\mathbb{N}}
\newcommand{\rk}{\mathbb{R}^m}
\newcommand{\new}{^{\mathrm{new}}}
\begin{document}

\numberwithin{equation}{section}
\numberwithin{theorem}{section}

\title{Informative simultaneous confidence intervals \\for graphical test procedures}


\author{ Werner~Brannath\thanks{Correspondence to: Werner Brannath,\ \ \href{mailto:brannath@uni-bremen.de}{brannath@uni-bremen.de}}\\ 
  Institute for Statistics and\\ 
  Competence Center for\\
  Clinical Trials Bremen\\
  University of Bremen\\
  Bremen, Germany \\  
   \And
	Liane Kluge\\
	Competence Center for\\
	Clinical Trials Bremen\\
  University of Bremen\\
  Bremen, Germany \\
	 \And
	Martin Scharpenberg\\
  Competence Center for\\
	Clinical Trials Bremen\\
  University of Bremen\\
  Bremen, Germany \\
 }

\date{\today}

\maketitle

\begin{abstract}

Simultaneous confidence intervals (SCIs) that are compatible with a given closed test procedure are often non-informative. More precisely, for a one-sided null hypothesis, the bound of the SCI can stick to the border of the null hypothesis, irrespective of how far the point estimate deviates from the null hypothesis. This has been illustrated for the Bonferroni-Holm and fall-back procedures, for which alternative SCIs have been suggested, that are free of this deficiency. These informative SCIs are not fully compatible with the initial multiple test, but are close to it and hence provide similar power advantages. They provide a multiple hypothesis test with strong family-wise error rate control that can be used in replacement of the initial multiple test. The current paper extends previous work for informative SCIs to graphical test procedures. The information gained from the newly suggested SCIs is shown to be always increasing with increasing evidence against a null hypothesis. The new SCIs provide a compromise between information gain and the goal to reject as many hypotheses as possible. The SCIs are defined via a family of dual graphs and the projection method. A simple iterative algorithm for the computation of the intervals is provided. A simulation study illustrates the results for a complex graphical test procedure.
\end{abstract}

\section{Introduction}

\subsection{Informative simultaneous confidence intervals}

Numerous multiple comparison procedures are available for a~large spectrum of settings, where more than one confirmatory assertion is desired so that the familywise error rate is controlled in the strong sense (see, for example, \citealp{HT87} or \citealp{D14} for an overview). Because of the increasing importance in practice, a~(draft of the) EMA/CHMP Guideline on multiplicity issues in clinical trials (\citealp{EMA17}) has been published. Besides error control, it emphasizes the importance of obtaining clinically interpretable results by providing confidence intervals that ``allow for consistent decision making with the primary hypothesis testing strategy". The implementation of such a~decision strategy is impeded by the fact that simultaneous confidence intervals (SCIs), which are both compatible with a~given multiple test, and informative, are not easy to find. Compatibility means that a~null hypothesis is rejected if and only if it is excluded from the confidence interval. 

We will call a confidence interval informative, if the information provided by the interval always increases (and only stays constant in the case that all gatekeepers for a hypothesis were not rejected) with increasing evidence against 
the corresponding null hypothesis. As a consequence, when a null hypothesis is rejected, the confidence interval will
have 
a non-zero distance to the null hypotheses, except for the singular and usually negligible event that the corresponding (un-adjusted) p-value is equal to its final local level. Hence, informative confidence intervals almost always provide additional information to the mere hypothesis test.

To be more formal, let us consider $m$~null hypotheses $H_j:\theta_j \le 0$ for $j=1,\ldots,m$. We are interested in rejecting as many hypotheses as possible and at the same time obtain informative SCIs. By the possibilities of shifting, inverting and intersecting one-sided intervals, we restrict ourselves to left-sided hypotheses, without loss of generality. The SCIs are then $m$-dimensional rectangles that are bounded from the left, i.e., $\mathrm{SCI}=(L_1,\infty)\times\cdots\times(L_m,\infty)$. They are compatible with the rejection decisions if for $i=1,\ldots,m$, the null hypothesis~$H_i$ is rejected if and only if $L_i\ge 0$. We now give a~formal definition for an informative SCI.

\begin{definition} \label{def_informative}
We call a simultaneous confidence interval given by lower bounds $L:=(L_1,\ldots,L_m)$ \textit{informative} if
\begin{enumerate}[(a)]
\item $L_i > -\infty$ whenever $H_i$ has no gatekeeper or for at least one gatekeeper $H_j$ for $H_i$ we have $L_j>0$;
\item $L_i(X')>L_i(X)$ for two data sets $X'$ and $X$, whenever $X'$ provides more evidence against $H_i$ than $X$, $L_i(X)>-\infty$ and for all $j\not= i$ the evidence against $H_j$ is stronger in $X'$ or the same in both data sets. In case $L_i(X)=-\infty$ (i.e.\ all gatekeepers of $H_i$ are not rejected by the test induced by $L$) $L_i(X')\geq L_i(X)$  
\end{enumerate}
\end{definition}

\begin{remark}\label{rem_gatekeeping}\ \\[-1.5em] 
\begin{enumerate}[(i)]
\item  The restriction in the first item of Definition~\ref{def_informative} corresponds to the case where rejection of a~hypothesis~$H_i$ is only of interest when another or several other hypotheses (the ``gatekeepers'') are rejected. If the gatekeeper(s) cannot be rejected, then $H_i$ and its parameter $\theta_i$ are not considered at all and $L_i = -\infty$ is unavoidable.
\item We have intentionally left point (b) in Definition~\ref{def_informative} somewhat informal, namely with regard to meaning of the statement ``increasing evidence against $H_i$''. The mathematical definition of this must depend on the statistical model and hypothesis under investigation. When the test statistic and p-value is based on an estimate $\hat{\theta}_i$ of the parameter $\theta_i$ in $H_i:\theta_i\le 0$, then the evidence against $H_i$ increases with increasing 
estimate $\hat{\theta}_i$. 

\item We will see (and explicitly state) below that for our method and theory to apply, we need the existence of p-values $p_i(\mu_i)$ for all  shifted null hypotheses $H^{\mu_i}_i:\theta_i\le\mu_i$, which are all strictly decreasing with increasing evidence against $H_i$. This is usually the case when all these p-values are based on the same estimate $\hat{\theta}_i$ for $\theta_i$. We will present a typical example in Section 2.

\item 
If (b) of Defintion~\ref{def_informative} is satisfied for increasing $\hat{\theta}_i$ and the conditional distribution of the estimate $\hat{\theta}_i$ given the other estimates $\hat\theta_j$, $i\neq j$ is continuous, i.e.\ has a conditional Lebesgue density, then $L_i=0$ will occur only with probability zero and we get  $L_i>0$ whenever $H_i$ is rejected. This is the case, for instance, when the $m$ estimates $\hat{\theta}_1,\ldots,\hat{\theta}_m$ are multivariate normal.  

\item The property, that $L_i>0$ whenever $H_i$ is rejected, has been used in \citealp{BS14} and \citealp{SB14,SB15} as the defining feature of an informative SCI. It has been illustrated by simulations with multivariate normal estimates and formally verified in \citealp{SB14} under the assumption of continuously distributed p-values. As noted in \citealp{SB14}, with non-continuous estimates and p-values, even classical confidence bounds do not meet this definition, i.e.\ can be equal to the border of the null hypothesis with positive probability. Definition~\ref{def_informative} has the advantage to be independent of the (conditional) distribution of the estimates and p-values and to apply (as far as we can see) to all classical confidence bounds including those of single step simultaneous confidence intervals. 
\end{enumerate}
\end{remark}
Stepwise procedures reject more hypotheses on average than single-step tests 
and are therefore preferred in practice. \citealp{SB08} and \citealp{G08} have proposed SCIs for a~large class of stepwise procedures based on the closed testing principle. These SCIs are not always informative, e.g., if not all hypotheses are rejected then the confidence intervals of rejected hypotheses equal the whole alternative hypotheses, irrespective of how much the point estimate points into the alternative hypothesis. Hence, they contradict Definition~\ref{def_informative}  and they are only of limited use, because they do not provide any more information than the rejection itself. The recommendation of the (draft of the) EMA Guideline on multiplicity issues in clinical trials concerning this conflict of interest is the following: ``it is advised to use simple but conservative confidence interval methods, such as Bonferroni-corrected intervals". This is comprehensive with regard to the wish of having intervals that do not lead to misinterpretation, which is of greatest importance in practice. On the other hand, there is the need for a~compromise, because the recommended intervals are either compatible with the -- conservative -- hypothesis test, or informative but not compatible with a~more complex test. Since SCIs contain always more information than pure rejection decisions, a~natural way out of this conflict is to construct the multiple testing procedure by directly defining informative simultaneous confidence bounds~$L_1,\ldots,L_m$. These bounds are naturally consistent with the multiple test, which, by definition, rejects a~null hypothesis~$H_i$ if and only if $L_i\ge 0$.

Based on this idea, \citealp{BS14} have constructed informative simultaneous confidence intervals, which are always informative and uniformly more powerful than the Bonferroni procedure with regard to the number of rejected hypotheses. They can be seen as a~compromise between the powerful Bonferroni-Holm procedure and the informative Bonferroni procedure. Similar procedures were proposed for the hierarchical and the fallback test by \citealp{SB14,SB15}. All these procedures belong to the class of graphical test procedures introduced in \citealp{B09} which permit to account for preferences and hierarchies among the different null hypotheses $H_i$, $i=1,\ldots,m$. In this paper, we extend the previous idea to SCIs that are based on a given graphical test as defined in \citealp{B09}. We will see that the rejections made with the original graph cannot be exactly reproduced by our procedure. In contrast, the proposed confidence bounds will always be informative for all hypotheses which do not have gatekeepers, which, to our knowledge, is not possible for the original procedure. Actually, we can find a~trade-off between the number of rejections done by the original graph and the expected size of the confidence bounds by choosing the involved information weight $q$ defined later accordingly. The new confidence intervals are defined via a continuous family of graphical tests that are derived from the original one. They can numerically be calculated by an extension of the algorithm in \citealp{B09}. 

In the following subsection, we will recall the definition of the graphical procedures from \citealp{B09}, which sets the basic notation for the new approach. In Section~\ref{sec_method}, we introduce the new SCIs, first by a~formal definition, which motivates the approach, and subsequently by an iterative algorithm, which can be implemented numerically. We will see that both perspectives yield the same SCIs. In Section~\ref{sec_examples}, we link the extended setting to our previous SCIs proposed for the hierarchical, fallback and Bonferroni-Holm procedure (Brannath and Schmidt, 2014; Schmidt and Brannath, 2014, 2015), \nocite{SB14,SB15,BS14} 
and we give an example of a~more complex graph where the advantages of the new approach are demonstrated. We conclude with a~discussion in Section~\ref{sec_discussion}.

\subsection{Graphical multiple test procedures} \label{sec_B09}

A multiple testing procedure in \citealp{B09} is given by a~graph~$G^0$ which consists of
\begin{itemize}
\item initial local levels $\alpha_1,\ldots,\alpha_m\ge 0$ for the $m$ hypotheses such that $\sum_{j=1}^m \alpha_j = \alpha$, where $\alpha$ is the predefined significance level,
\item a transition matrix $(g_{ij})_{i,j=1,\ldots,m}$, where $g_{ij}$ is the weight by which the level of the $i$th hypothesis is shifted after its rejection to the $j$th hypothesis.
\end{itemize}
An example is given in Figure~\ref{fig_graph}. This graph is given in \citealp{B09} and is an example for a step-down test without order constraint from \citealp{B01}. Three treatments are compared with respect to efficacy and safety so that altogether six hypotheses are tested. The significance level is equally split across the three efficacy hypotheses, i.e., $\alpha_{E_1} = \alpha_{E_2} = \alpha_{E_3} = \alpha/3$ and $\alpha_{S_1} = \alpha_{S_2} = \alpha_{S_3} = 0$. Safety of a~treatment is tested only after the respective efficacy assertion has been shown. Hence, for all $i=1,2,3$, $H_{E_i}$ is a gatekeeper for $H_{S_i}$. There is no hierarchy between the three treatments. The corresponding transition matrix is
$$
\left(
\begin{matrix}
									   0 & 0 & 0 & 1 & 0 & 0 \\
                     0 & 0 & 0 & 0 & 1 & 0 \\
                     0 & 0 & 0 & 0 & 0 & 1 \\ 
                     0 & 0.5 & 0.5 & 0 & 0 & 0 \\ 
                     0.5 & 0 & 0.5 & 0 & 0 & 0 \\ 
                     0.5 & 0.5 & 0 & 0 & 0 & 0
\end{matrix}
\right),
$$
whereby the first three components belong to the efficacy hypotheses and the last three to safety.

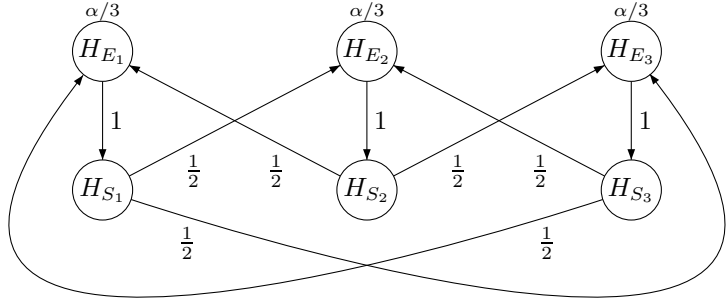
\begin{figure}
      \centering



\unitlength=2.5pt
\begin{picture}(100,42)(-5,-13)

	
	\node(A1)(0,21){$H_{E_1}$} 
  \node(A2)(40,21){$H_{E_2}$} 
	\node(A3)(80,21){$H_{E_3}$} 

	\node(B1)(0,0){$H_{S_1}$} 
  \node(B2)(40,0){$H_{S_2}$} 
  \node(B3)(80,0){$H_{S_3}$} 

  \put(0,27) {\makebox(0,0){\scriptsize$\alpha/3$}} 
  \put(40,27){\makebox(0,0){\scriptsize$\alpha/3$}} 
  \put(80,27) {\makebox(0,0){\scriptsize$\alpha/3$}} 
		
  \drawedge(A1,B1){$1$}  
  \drawedge(A2,B2){$1$} 
  \drawedge(A3,B3){$1$} 
  \drawedge[ELpos=30,ELside=r](B1,A2){$\frac{1}{2}$} 
  \drawedge[ELpos=30,ELside=l](B2,A1){$\frac{1}{2}$} 
  \drawedge[ELpos=30,ELside=r](B2,A3){$\frac{1}{2}$} 
  \drawedge[ELpos=30,ELside=l](B3,A2){$\frac{1}{2}$} 

  \drawbpedge[ELpos=20,ELside=r](B1,0,0,A3,-50,90){$\frac{1}{2}$}
  \drawbpedge[ELpos=20,ELside=l](B3,0,0,A1,-130,90){$\frac{1}{2}$}

	\end{picture}
	


%
%
      \caption{Example for a complex graphical procedure (Figure~8 from Bretz et al., 2009). $H_{E_j}$ is an efficacy hypothesis for the $j$th treatment and $H_{S_j}$ is a~safety hypothesis for the $j$th treatment.}
  \label{fig_graph}
\end{figure}
\nocite{B09}

The graphical algorithm is as follows: If a~hypothesis~$H_i$ can be rejected, its level is allocated to the other hypotheses according to the transition weights~$g_{ij}$, i.e.,
$$
\alpha_j\new = \alpha_j + \alpha_ig_{ij}, \quad j\ne i.
$$
 Arrows going from and to~$H_i$ are deleted and transition weights of the other arrows are updated as follows:
\begin{equation}
\label{update}
g_{jl}\new = \begin{cases}
							\frac{g_{jl} + g_{ji}g_{il}}{1-g_{ji}g_{ij}} & \text{if } j,l \ne i,\ j\ne l,\ g_{ji}g_{ij} \ne 1,\\
							0 & \text{else}.
						 \end{cases}
\end{equation}
In the next step, the remaining hypotheses are tested with their new local level. The graph is updated upon each rejection, until no more rejections are possible. It has been shown in \citealp{B09} that the set of rejected hypotheses is independent of the order in which these hypotheses are rejected. It was also proven that the procedure controls the familywise error rate at level $\alpha =  \sum_{j=1}^m \alpha_j$.

\section{The method} \label{sec_method}

\subsection{Definition of confidence bounds} \label{sec_definition}

Given observations (e.g. from a clinical trial), we assume that we have local p-values $p_j$, $j=1,\ldots,m$, for the null hypotheses $H_j = H_j^0:\theta_j\le 0$. Further, our observations allow us also to define p-values $p_j(\mu_j)$ for the shifted hypotheses $H_j^{\mu_j}:\theta_j\le\mu_j$. For example, if the estimate~$\hat{\theta}$ is (approximately) Gaussian, then $p_j = p_j(0) = 1-\Phi(\hat{\theta}_j/\mathrm{SE}_j)$ for some standard error~$\mathrm{SE}_j$. If the standard error ~$\mathrm{SE}_j$ is independent from $\theta$, then a~natural p-value for testing $H_j^{\mu_j}$ is 
\begin{equation}\label{eq_spval}
p_j(\mu_j) = 1-\Phi\left(\frac{\hat{\theta}_j-\mu_j}{\mathrm{SE}_j}\right).
\end{equation}
Usually, the p-values are strictly increasing and continuous in $\mu_j$, such that $\lim_{\mu_j\to -\infty}p_j(\mu_j)=0$ and $\lim_{\mu_j\to \infty}p_j(\mu_j)=1$. Moreover, as mentioned earlier, we will assume that for all $\mu_j\in \rr$ the p-values 
$p_j(\mu_j)$ strictly decrease with increasing evidence against $H_j$. This is obviously the case for the p-values in 
(\ref{eq_spval}), since the they decrease with increasing $\hat{\theta}_j$. Note that the mentioned properties apply also to the shifted p-values from $t$-distributed test statistics when the non-centrality parameter increase with increasing $\mu_j$ (which is usually the case).  

We start from a~graph as given in Section~\ref{sec_B09}, i.e., we have initial levels $\alpha_1,\ldots,\alpha_m\ge 0$, $\sum_{j=1}^m \alpha_j=\alpha$ and a~transition matrix $(g_{ij})_{i,j=1,\ldots,m}$. Our goal is to obtain simultaneous confidence intervals that: (i) reflect the structure of the testing procedure given by the graph, and (ii) are always informative (in the sense of Definition~\ref{def_informative}). Throughout this article, we will assume that the graph is complete in the following sense.
\begin{assumption} \label{ass_complete}
We assume that for all $i=1,\ldots,m$, the transition weights starting from $H_i$ sum up to one,
i.e.\ $\sum_{j=1}^m g_{ij} = 1$ (and that all $g_{ii}=0$).
\end{assumption}
\noindent We will explain in Remark~\ref{rem_extension} below how our procedure can be adapted if the graph is not complete.

\paragraph{Basic projection method.} By modifying the given graph, we construct weighted Bonferroni tests for the intersection of shifted hypotheses $H^{\mu}=H_1^{\mu_1}\cap\ldots\cap H_m^{\mu_m}$ for each $\mu=(\mu_1,\ldots,\mu_m)\in\rk$. We will reject the intersection hypothesis $H^{\mu}$ globally if and only if at least one of the hypotheses $H_j^{\mu_j}$ can be rejected at its local level $\alpha_j^{\mu}$ where $\sum_{j=1}^m \alpha_j^{\mu}=\alpha$ for all $\mu$. We can then define the $m$-dimensional confidence set 
$$C=\{\mu\in\rk:H^{\mu} \text{ is not rejected}\}=\{\mu\in\rk:\min_{\substack{j=1,\ldots,m\\ \alpha_j^{\mu}>0}} p_j(\mu_j)/\alpha^\mu_j>1\}.$$ 
By construction, the coverage probability of~$C$ is at least $(1-\alpha)$. 

We next construct the simultaneous confidence bounds~$L_j$, $j=1,\ldots,m$, by projection, i.e., the simultaneous confidence interval $(L_1,\infty)\times\cdots\times(L_m,\infty)$ is the smallest rectangle containing the open set~$C$. Since it contains~$C$, its coverage probability is not smaller than $1-\alpha$. 

\paragraph{Dual graphs and resulting weighted Bonferroni tests.} For each $\mu\in\rk$ we will construct local levels $\alpha_1^{\mu},\ldots,\alpha_m^{\mu}$ summing up to~$\alpha$. They will depend on a~parameter $q \in (0,1)$, which we will call \emph{information weight}. We explain the significance of~$q$ in Section~\ref{sec_properties}. 

For given $\mu\in\rk$ the local levels are constructed in two steps: in the first step we define a dual graph that contains all shift null hypotheses $H^{\mu_j}_j$ 
and some (not necessarily all) initial null hypotheses as nodes; in the second step we reject all initial null hypotheses in this graph to obtain the levels $\alpha_j^{\mu}$ for $H^{\mu_j}_j$, $j=1,\ldots, m$. 

\textit{First step.}
We define a new graph $G^{\mu}$ by modifying the given graph $G$ as follows: 
\begin{itemize}
\item for all $j$ with $\mu_j \le 0$:
\begin{itemize}
\item delete all paths starting at $H_j$, i.e. set $g_{ji}^{\mu}=0$ for all $i$;
\item replace $H_j$ by $H^{\mu_j}_j$.
\end{itemize}
\item for all $j$ with $\mu_j > 0$:
\begin{itemize}
\item add a node for the hypothesis $H_j^{\mu_j}$ to the graph with local level~$0$;
\item introduce an arrow from $H_j$ to $H_j^{\mu_j}$ with transition weight $q^{\mu_j}$;
\item change all transition weights starting from $H_j$ to $g_{ji}^{\mu}=g_{ji}(1-q^{\mu_j})$.
\end{itemize}
\end{itemize}

The resulting graph $G^{\mu}$ contains $m$ nodes with the starting levels $\alpha_1,\ldots,\alpha_m$ plus up to $m$ nodes with initial level zero. Note that all shifted hypotheses $H_j^{\mu_j}$ are contained in the graph. It is still a valid graph within the framework of \citealp{B09}, i.e., the row sums of the transition matrix are equal to~$1$. Figure~\ref{fig_graphs} shows an example of a graph $G$ and the resulting graph~$G^{\mu}$.

The rationale for the graph $G^{\mu}$ is that we can imagine it as the test for a~hypothesis~$H^\mu=H_1^{\mu_1}\cap\ldots\cap H_m^{\mu_m}$ that has not yet been rejected. If $\mu_j \le 0$ for some~$j$, then also the null hypothesis~$H_j=H^0_j$ has not yet been rejected and therefore, full level has to be given to~$H^{\mu_j}_j$ and no transfer to other hypotheses takes place. If however $\mu_j > 0$, then testing~$H_j^{\mu_j}$ increases the information on $\mu_j$ after the null hypothesis $H_j$ has been rejected. To this end, the graph contains $H_j$ and $H_j^{\mu_j}$ with a	non-zero transition weight (namely $q^{\mu_j}$) from $H_j$ to~$H_j^{\mu_j}$.

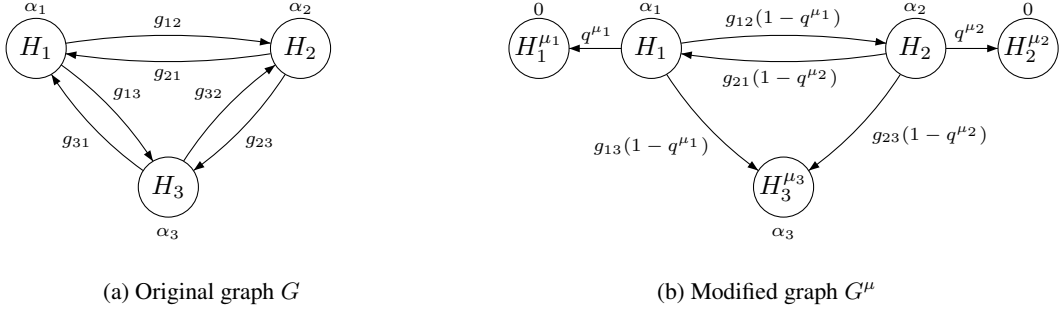
\begin{figure}
  \begin{subfigure}{0.4\textwidth}
      \centering
      \unitlength=2.5pt
\begin{picture}(60,37)(-5,-3)

  \thinlines
	\newcommand{\cb}{}
	
	
	\node(A1)(0,21){$H_{1}$} 
  \node(A2)(40,21){$H_{2}$} 
  \node(A3)(20,0){$H_{3}$} 

  \put(0,27) {\makebox(0,0){\scriptsize $\alpha_1$}} 
  \put(40,27){\makebox(0,0){\scriptsize $\alpha_2$}} 
  \put(20,-7) {\makebox(0,0){\scriptsize $\alpha_3$}} 
		
  \gasset{curvedepth=2}

   \drawedge(A1,A2){\scriptsize $g_{12}$}  
   \drawedge(A2,A3){\scriptsize $g_{23}$} 
   \drawedge(A3,A2){\scriptsize $g_{32}$} 
   \drawedge(A1,A3){\scriptsize $g_{13}$} 
   \drawedge(A3,A1){\scriptsize $g_{31}$} 
   \drawedge(A2,A1){\scriptsize $g_{21}$} 
	
	\end{picture}
      \caption{Original graph~$G$}
  \end{subfigure}
  \begin{subfigure}{0.5\textwidth}
      \centering
      \unitlength=2.5pt
\begin{picture}(75,37)(-20,-3)

  \thinlines
	\newcommand{\cb}{}

	
	 \node(A1)(0,21){$H_{1}$} 
   \node(A2)(40,21){$H_{2}$} 
   \node(A3)(20,0){$H_{3}^{\mu_3}$} 

    \put(0,27) {\makebox(0,0){\scriptsize $\alpha_1$}} 
    \put(40,27){\makebox(0,0){\scriptsize $\alpha_2$}} 
    \put(20,-7) {\makebox(0,0){\scriptsize $\alpha_3$}}


  \gasset{curvedepth=-2}

  \drawedge[ELpos=40,ELdistC=y,ELdist=-8](A1,A3){{\cb\scriptsize $g_{13}(1-q^{\mu_1})$} }
		
  \gasset{curvedepth=2}

  \drawedge(A1,A2){{\cb\scriptsize $g_{12}(1-q^{\mu_1})$}} 
  \drawedge[ELpos=30](A2,A3){{\cb\scriptsize $g_{23}(1-q^{\mu_2})$} }
  \drawedge(A2,A1){{\cb\scriptsize $g_{21}(1-q^{\mu_2})$} }
	
	
	\node(A1x)(-17,21){$H_{1}^{\mu_1}$} 
	\node(A2x)(57,21){$H_{2}^{\mu_2}$} 

  \put(-17,27) {\makebox(0,0){\scriptsize $0$}} 
  \put(57,27){\makebox(0,0){\scriptsize $0$}} 

 \gasset{curvedepth=0}
	
  \drawedge[ELdistC=y,ELdist=-2](A1,A1x){{\cb\scriptsize $\ q^{\mu_1}$}} 
  \drawedge(A2,A2x){{\cb\scriptsize $\ q^{\mu_2}$} }

\end{picture}
      \caption{Modified graph~$G^{\mu}$}
 \end{subfigure}
  \caption{From the original graph (a) to a graph (b) where the shifted hypotheses $H_1^{\mu_1}$ and $H_2^{\mu_2}$ with $\mu_1,\mu_2 > 0$ are added, while the null hypothesis $H_3$ is transformed to the shifted hypothesis~$H_3^{\mu_3}$ with $\mu_3\le 0$.}
  \label{fig_graphs}
\end{figure}

\textit{Second step.}
We reject in $G^{\mu}$ all null hypotheses $H_j=H^0_j$, performing the update algorithm of \citealp{B09}. The final graph contains only the shifted hypotheses $H^{\mu_j}_j$ with some new levels $\alpha_j^{\mu}$ and no arrows between these hypotheses. We will explain in the Appendix why the local levels $\alpha_j^{\mu}$ always satisfy
\begin{equation} \label{eq_alpha}
\sum_{j=1}^m\alpha_j^{\mu} = \alpha.
\end{equation}
Basically, this is because we started with a valid graph $G^{\mu}$ and performed only rejections according to \citealp{B09}, and because of Assumption~\ref{ass_complete}. However, some more technical issues have to be discussed for a~proof. To summarize, we have obtained a~level-$\alpha$ test for each $\mu\in\rk$, which rejects~$H^{\mu}$ if at least one of the local p-values satisfies $p_j(\mu_j) \le \alpha_j^{\mu}$.

According to the projection method described at the beginning of this section, the new lower simultaneous confidence bounds are then defined as follows.
\begin{definition}\label{def_iSCI}
Given the p-values $p_j(\mu_j)$, $\mu\in\rk$ and $j=1,\ldots,m$, the initial graph $G^{0}$, we define, based on the local levels $\alpha^\mu_j$ resulting from the above two steps with the dual graph $G^\mu$, 
the lower simultaneous bounds
\begin{equation}\label{eq_isci}
L_j=\max\{\mu_j:p_j(\mu_j)\le \alpha^{\mu'}_j \text{ for all }\mu'\in\rk\text{ with }\mu'_j=\mu_j\},\ j=1,\ldots,m,
\end{equation}
which have simultaneous coverage probability of at least $1-\alpha$ by construction.
\end{definition}

\subsection{Numerical algorithm to calculate the confidence bounds}

We now prove a~property of the local levels $\alpha_j^{\mu}$, which allows us to obtain the bounds~$L_j$ by a~simple numerical algorithm without the need to test the uncountable many intersection hypotheses~$H^{\mu}$ for $\mu\in\rk$ and calculate the maximum in (\ref{eq_isci}).

\begin{proposition} \label{prop_nu}
For all $\mu=(\mu_1,\ldots,\mu_m)\in\rk$, the local levels $\alpha_j^{\mu}$ derived in Section~\textup{\ref{sec_definition}} are of the form
$$
\alpha_j^{\mu} = q^{\mu_j\vee 0}\nu_j(\mu)\alpha,
$$
where $\nu_j\colon\rk\to\mathbb R_{\geq 0}$ is continuous and non-decreasing in each component.
\end{proposition}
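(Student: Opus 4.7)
The plan is to interpret the level propagation induced by the update rule~(\ref{update}) as absorption in a Markov chain on the nodes of the dual graph $G^\mu$, and then to read off a closed-form expression for $\alpha_j^\mu$ from which both properties follow by inspection. Set $A(\mu):=\{i:\mu_i>0\}$. The nodes that actually get rejected in the second step of Section~\ref{sec_definition} are precisely the $H_i$ with $i\in A(\mu)$, and the surviving nodes are the shifted hypotheses $H_j^{\mu_j}$. A well-known consequence of the update rule (\citealp{B09}) is that the final levels coincide with the absorbed-mass distribution of the Markov chain on $G^\mu$ in which the $H_i$ ($i\in A(\mu)$) are transient and the $H_j^{\mu_j}$ absorbing; from $H_i$ the chain moves to $H_i^{\mu_i}$ with probability $q^{\mu_i}$ and to the node at position $k$ ($k\neq i$) with probability $g_{ik}(1-q^{\mu_i})$, where that target is $H_k$ if $k\in A(\mu)$ and $H_k^{\mu_k}$ otherwise. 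The initial mass is $\alpha_i$ on each $H_i$ ($i\in A(\mu)$) and $\alpha_j$ on each $H_j^{\mu_j}$ ($j\notin A(\mu)$).

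Let $M^\mu$ denote the restriction of the transition matrix to $A(\mu)$, i.e.\ $(M^\mu)_{ik}=g_{ik}(1-q^{\mu_i})$ for $i,k\in A(\mu)$, $k\neq i$, and $0$ on the diagonal. Assumption~\ref{ass_complete} together with $q^{\mu_i}\in(0,1)$ forces the $i$-th row sum of $M^\mu$ to equal $1-q^{\mu_i}<1$, so the Neumann series $N^\mu:=\sum_{k\ge 0}(M^\mu)^k=(I-M^\mu)^{-1}$ converges with non-negative entries. The absorption probability $P_{ij}^\mu$ from $H_i$ into $H_j^{\mu_j}$ equals $N_{ij}^\mu\,q^{\mu_j}$ for $j\in A(\mu)$ (since only $H_j$ feeds into $H_j^{\mu_j}$, and with weight $q^{\mu_j}$) and $\sum_{k\in A(\mu)} N_{ik}^\mu\,g_{kj}(1-q^{\mu_k})$ for $j\notin A(\mu)$. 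Combining with the initial masses gives
\[
\alpha_j^\mu \;=\; \alpha_j\,\mathbf{1}_{\{j\notin A(\mu)\}} \;+\; \sum_{i\in A(\mu)}\alpha_i\, P_{ij}^\mu,
\]
and in either case the prefactor $q^{\mu_j\vee 0}$ (equal to $q^{\mu_j}$ when $\mu_j>0$ and $1$ otherwise) pulls out cleanly, giving an explicit formula for $\nu_j(\mu)=\alpha_j^\mu/(q^{\mu_j\vee 0}\alpha)$.

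Monotonicity is then immediate from the Neumann-series form of $N^\mu$: every entry is a sum of products of the non-negative entries $(M^\mu)_{ik}=g_{ik}(1-q^{\mu_i})$, each of which is non-decreasing in every $\mu_l$ (strictly increasing if $l=i$, since $1-q^{\mu_i}$ is strictly increasing in $\mu_i$, and constant otherwise). For $j\notin A(\mu)$, the extra factor $g_{kj}(1-q^{\mu_k})$ appearing in $P_{ij}^\mu$ is likewise non-decreasing in $\mu_k$, and $\nu_j$ has no dependence on $\mu_j$ within this regime.

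The delicate point -- and the main obstacle I anticipate -- is continuity of $\nu_j$ across the coordinate hyperplanes $\{\mu_l=0\}$, where $G^\mu$ changes qualitatively: for $\mu_l>0$ one keeps $H_l$ as a transient node and adds a separate absorbing node $H_l^{\mu_l}$, whereas for $\mu_l\le 0$ one just replaces $H_l$ by an absorbing $H_l^{\mu_l}$. At $\mu_l=0$, however, $q^{\mu_l}=1$ makes the $l$-th row of $M^\mu$ vanish and the transition $H_l\to H_l^{\mu_l}$ deterministic, so the transient state $H_l$ and the absorbing state $H_l^{\mu_l}$ play the same role. I would confirm continuity by writing out the two explicit expressions for $\alpha_j^\mu$ on each side of the boundary and checking, via $q^0=1$, that they agree in the limit. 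With continuity across all such boundaries in hand, the piecewise monotonicity proved above extends globally.
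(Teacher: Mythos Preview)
Your approach is correct but takes a genuinely different route from the paper's. The paper's proof works entirely with four abstract properties of the sequential update rule of \citealp{B09}---order-invariance of the final levels, that each update step produces new transition weights which are continuous and non-decreasing functions of the old ones, a structural independence property, and linearity of the resulting levels in $\alpha_1,\ldots,\alpha_m$---and then treats the cases $\mu_j\le 0$ and $\mu_j>0$ separately, in the latter case rejecting $H_j$ last so as to isolate the factor $q^{\mu_j}$ coming from the single arrow $H_j\to H_j^{\mu_j}$. You instead pass to the Markov-chain absorption picture and write $\alpha_j^\mu$ in closed form via the fundamental matrix $N^\mu=(I-M^\mu)^{-1}$ on the transient set $A(\mu)$. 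This buys you an explicit formula for $\nu_j$ (in fact the one underlying~\eqref{eq_fallback} in the fallback case) and makes monotonicity transparent term-by-term in the Neumann expansion, at the cost of having to track the index set $A(\mu)$ and check continuity separately across each hyperplane $\{\mu_l=0\}$; the paper's qualitative argument does not single out these boundaries. Two small points to tighten: the identification of the final levels after iterated application of~\eqref{update} with absorbed masses is standard but deserves a brief justification or citation rather than ``well-known''; and the boundary continuity you outline should actually be carried out---your observation that at $\mu_l=0^+$ the $l$-th row of $M^\mu$ vanishes, so $H_l$ is instantaneously absorbed into $H_l^{\mu_l}$ and the two regimes coincide, is exactly the right mechanism, and the verification is short.
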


\begin{proof}
We know or can see from \citealp{B09} and the up-date algorithm therein that the graphical test algorithm has the following properties:
\begin{enumerate}
\item The order in which hypotheses are rejected does not influence the local levels of the final graph.
\item After every rejection step, the new transition weights are continuous and non-decreasing functions of the old transition weights.
\item After every rejection step, the new level of a hypothesis and the new weights starting from this hypothesis are independent of the old levels of any other, non-rejected hypothesis and independent of all old transition weights that go from or to the other, non-rejected hypotheses.
\item The new level of a hypothesis is a~linear combination of~$\alpha_1,\ldots,\alpha_m$.
\end{enumerate}
Fix~$j$ and $\mu$. As already described, the level $\alpha_j^{\mu}$ for $H_j^{\mu_j}$ arises from rejecting all null hypotheses~$H_k=H^0_k$ in the graph $G^{\mu}$. If $\mu_j \le 0$, then only $H_k^{0}$ with $k\ne j$ are rejected. By the properties 2.~and 3., the final local level for $H_j^{\mu_j}$ is then a continuous and non-decreasing function of one or more $g_{kl}(1-q^{\mu_k})$ for $k\ne l$, and by 4., it is a~multiple of~$\alpha$. Hence, it is of the form $\nu_j(\mu)\alpha$ for a continuous and non-decreasing function~$\nu_j$.

If $\mu_j > 0$, then by 1., we can w.l.o.g.\ reject at first all $H_k$ with $k\ne j$ and finally reject $H_j$. This last step multiplies the level $\nu_j(\mu)\alpha$ by $q^{\mu_j}$ so that the desired representation is obtained.
\end{proof}


We remind that $H^{\mu}=H_1^{\mu_1}\cap\ldots\cap H_m^{\mu_m}$ is rejected if, for some $j$, we have $p_j(\mu_j) \le \alpha_j^{\mu}$, i.e. by Proposition~\ref{prop_nu}, if
\begin{equation}
\label{eq_iteration}
p_j(\mu_j)q^{-(\mu_j\vee 0)} \le \nu_j(\mu)\alpha.
\end{equation}

Note that both, the left-hand and the right-hand side of~\eqref{eq_iteration}, are non-decreasing in~$\mu_j$. Based on this property, we can define an iterative algorithm for the calculation of the lower confidence bounds.

\begin{algorithm} \label{alg_ki}
Find a~starting vector $\mu^{(0)}\in\rk$ such that
\begin{equation}\label{eq_astart}
p_j(\mu_j^{(0)})q^{-(\mu_j^{(0)}\vee 0)} \le \nu_j(\mu^{(0)})\alpha \quad \text{for all } j=1,\ldots,m.
\end{equation}
Given $\mu^{(k)}$, define $\mu^{(k+1)}$ as solution of
\begin{equation}
\label{eq_step}
p_j(\mu_j^{(k+1)})q^{-(\mu_j^{(k+1)}\vee 0)} = \nu_j(\mu^{(k)})\alpha \quad \text{for all } j=1,\ldots,m.
\end{equation}
Stop if $\|\mu^{(k+1)} - \mu^{(k)}\|_2 < \varepsilon$ for some predefined threshold~$\varepsilon > 0$.
\end{algorithm}

By continuity and monotonicity, the first iterated value $\mu^{(1)}$ exists and is larger than (or equal to) $\mu^{(0)}$ component-wise (i.e.\ $\mu^{(1)}_j\geq\mu^{(0)}_j$ $\forall j=1,...,m$).  Since \eqref{eq_step} is satisfied and $\nu_j$ is non-decreasing, it follows that \eqref{eq_iteration} holds also for~$\mu^{(1)}$. By induction, we obtain that $(\mu^{(k)})_{k\ge 1}$ is a~non-decreasing sequence of vectors satisfying~\eqref{eq_iteration}. Because of the monotony of $\nu_j$, the sequence $(\mu^{(k)})_{k\ge 1}$ is bounded from above by $p_j^{-1}(\alpha_j)$, $j=1,\ldots,m$. Therefore, the sequence converges and the algorithm stops finally. We denote by $\mu^{\infty}$ the limiting value. 
In particular, if for some~$k\in\nn_0$ equality~\eqref{eq_step} holds for $\mu^{(k+1)}=\mu^{(k)}$, then the algorithm stops with $\mu^{\infty}=\mu^{(k)}$.

Our next goal is to show that $\mu^{\infty}$ is equal to the bounds (\ref{eq_isci}) of Definition~\ref{def_iSCI}. This results from the following theorem, which we proof in the appendix.

\begin{theorem} \label{thm_grenzwert}
The following properties are satisfied in Algorithm~\textup{\ref{alg_ki}}.
\begin{enumerate}
\item[\textup{(a)}] If $\mu^{(0)} \le L$ (component-wise), then also $\mu^{(k)}\le L$ for all $k\in\nn$.
\item[\textup{(b)}] The limiting value is independent of the starting value of the algorithm.
\item[\textup{(c)}] For any given starting value $\mu^{(0)}$, the limiting value satisfies $\mu^{\infty} \ge L$.
\end{enumerate}
We obtain from (a) to (c)  that $\mu^{(k)}$ converges to the confidence bounds  $L=(L_1,\ldots,L_m)$  defined in
(\ref{eq_isci}), if we find a valid starting value of Algorithm~\ref{alg_ki}.
\end{theorem}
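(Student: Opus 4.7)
The natural framework is to view the iteration as a map $T: \rk \to \rk$ where $T(\mu)_j$ is the unique solution of $p_j(T(\mu)_j) q^{-(T(\mu)_j \vee 0)} = \nu_j(\mu)\alpha$. Since $\mu \mapsto p_j(\mu) q^{-(\mu \vee 0)}$ is continuous, strictly increasing, and surjective onto $(0, \infty)$, the map $T$ is well-defined and continuous. Because $\nu_j$ is non-decreasing in every component by Proposition~\ref{prop_nu}, $T$ is also monotone: $\mu \le \tilde\mu$ implies $T(\mu) \le T(\tilde\mu)$. The text preceding the theorem has already established that iterates $\mu^{(k+1)} = T(\mu^{(k)})$ form a non-decreasing bounded sequence, so $\mu^{(k)} \to \mu^\infty$ for some limit, and by continuity of $T$ this limit satisfies the fixed-point equation $p_j(\mu^\infty_j) q^{-(\mu^\infty_j \vee 0)} = \nu_j(\mu^\infty)\alpha$ for all $j$.

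For part (a) I would proceed by induction on $k$. Given the inductive hypothesis $\mu^{(k)} \le L$, monotonicity of $T$ yields $\mu^{(k+1)} = T(\mu^{(k)}) \le T(L)$, so it suffices to prove $T(L) \le L$. I would extract this from Definition~\ref{def_iSCI} by a continuity-at-the-boundary argument: since $L_j$ is the \emph{maximum} $\mu_j$ for which $p_j(\mu_j) \le \alpha_j^{\mu'}$ holds for all $\mu'$ with $\mu'_j = \mu_j$, for every $\epsilon > 0$ there exists a witness $\mu^*$ with $\mu^*_j = L_j + \epsilon$ at which the bound fails. Letting $\epsilon \to 0$ and using continuity and monotonicity of $\nu_j$, the worst-case $\mu^*$ localizes to a limiting $\mu'$ at $\mu_j = L_j$ and should coincide with $L$ itself; the defining inequality then tightens to $p_j(L_j) q^{-(L_j \vee 0)} \ge \nu_j(L)\alpha$, which by the strict monotonicity of $p_j(\cdot) q^{-(\cdot \vee 0)}$ forces $T(L)_j \le L_j$.

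For part (c) I would argue by contradiction: if $\mu^\infty_j < L_j$ for some $j$, then $\mu^\infty_j$ is admissible in the max characterization of $L_j$, so $p_j(\mu^\infty_j) \le \alpha_j^{\mu'}$ for every $\mu'$ with $\mu'_j = \mu^\infty_j$. Choosing $\mu' = \mu^\infty$ and comparing the resulting strict inequality $p_j(\mu^\infty_j) < \alpha_j^{\mu^\infty}$ with the fixed-point equation $p_j(\mu^\infty_j) = \alpha_j^{\mu^\infty}$ yields the desired contradiction. Part (b) is then immediate from (a) and (c): for any valid starting value below $L$, part (a) forces the trajectory and its limit to lie at or below $L$, while part (c) forces the limit to lie at or above $L$, so the limit equals $L$ regardless of starting point.

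The main obstacle is part (a): one has to reconcile the ``for all $\mu' \in \rk$ with $\mu'_j = \mu_j$'' quantifier in Definition~\ref{def_iSCI} with the single-vector dependence of the iteration on $\nu_j(L)$. The crux is to identify the effective worst-case $\mu'$ in the max characterization of $L_j$ as $\mu' = L$ itself (at least in the limit $\mu_j \downarrow L_j$), so that the defining inequality becomes the equality $p_j(L_j) q^{-(L_j \vee 0)} = \nu_j(L)\alpha$ and certifies $L$ as a fixed point of $T$. This self-consistency of $L$ is where the detailed structure of the dual graph $G^\mu$ and the specific monotonicity provided by Proposition~\ref{prop_nu} must enter in an essential way.
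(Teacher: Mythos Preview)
Your strategy—view the iteration as a monotone map $T$, show $T(L)\le L$, and deduce everything from that—is natural, but the argument does not close, and the paper takes a quite different route.

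The decisive gap is in part~(a). You need $p_j(L_j)q^{-(L_j\vee 0)}\ge \nu_j(L)\alpha$, and you propose to get it by letting the failure witness $\mu^*$ at $L_j+\epsilon$ converge to $L$. But the witness goes the other way: since $\nu_j$ is non-decreasing in every coordinate (Proposition~\ref{prop_nu}), the infimum of $\alpha_j^{\mu'}$ over $\{\mu':\mu'_j=\mu_j\}$ is attained by sending the other coordinates $\mu'_i$ to $-\infty$ (equivalently $\le 0$), not by setting $\mu'_i=L_i$. So the ``worst case'' never localizes at $L$, and the inequality you want does not emerge. What you are really trying to prove here is that $L$ is a fixed point of $T$, which is essentially the content of the whole theorem; the paper avoids this circularity by never comparing $T(L)$ with $L$. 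Instead it uses the half-space characterization $\delta\le L\iff p(\mu)\le\alpha$ for all $\mu\in H(\delta)=\bigcup_j\{\mu:\mu_j\le\delta_j\}$, and shows directly from the iteration equation that every $\mu$ in the new slab $H(\mu^{(k+1)})\setminus H(\mu^{(k)})$ still satisfies $p(\mu)\le\alpha$.

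Parts~(b) and~(c) also do not go through as written. In~(c), specializing to $\mu'=\mu^\infty$ gives only $p_j(\mu^\infty_j)\le\alpha_j^{\mu^\infty}$, not a strict inequality, so there is no contradiction with the fixed-point equality; the paper instead embeds the algorithm in a family indexed by the level $u\ge\alpha$, proves the limit $\varphi(u)$ is continuous, notes $\varphi(u)\in C$ (hence $\ge L$) for $u>\alpha$, and lets $u\downarrow\alpha$. In~(b), your deduction from (a)+(c) only covers starting values $\mu^{(0)}\le L$, whereas valid starting values need not satisfy this; the paper proves uniqueness independently via the identity $p(\mu)=\sum_j p_j(\mu_j)$ on the fixed-point set.
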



\begin{remark}[How to find a starting value.] \label{rem_startvalue}
As a starting value $\mu^{(0)}\le L$ we can choose $\mu_j^{(0)}=\min\{0,p_j^{-1}(\alpha_j)\}$, where we formally put $p_j^{-1}(0)=-\infty$. This vector obviously satisfies (\ref{eq_astart}) and $\mu^{(0)}\in [-\infty,0]^m$ with $\mu^{(0)}_j>-\infty$ whenever $\alpha_j>0$.
\end{remark}


\subsection{Properties of the confidence bounds} \label{sec_properties}

The main property of the new confidence bounds is that they are always informative in the sense of Definition~\ref{def_informative}. This is an essential advantage over the SCIs proposed by \citealp{SB08}.
\begin{proposition} \label{prop_informative}
The simultaneous confidence bounds obtained by Algorithm~\textup{\ref{alg_ki}} (or equivalently, by projection as described in Section~\textup{\ref{sec_definition}}) satisfy the conditions (a) and (b) of Definition~\textup{\ref{def_informative}}, i.e., they are always informative.
\end{proposition}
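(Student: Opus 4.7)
The strategy is to characterise $L=(L_1,\ldots,L_m)$ as the fixed point $\mu^{\infty}$ of Algorithm~\ref{alg_ki}, using Theorem~\ref{thm_grenzwert}, and to exploit the factorisation $\alpha_j^{\mu} = q^{\mu_j\vee 0}\,\nu_j(\mu)\,\alpha$ from Proposition~\ref{prop_nu}. Passing to the limit in (\ref{eq_step}) and using the continuity of $p_j$ and $\nu_j$, one obtains, for every coordinate with $L_j$ finite,
\begin{equation*}
p_j(L_j)\, q^{-(L_j\vee 0)} \;=\; \nu_j(L)\,\alpha.
\end{equation*}
This fixed-point relation will drive both parts of the proof.

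For part~(a), I would first observe that for any $\mu\in\rk$ and any $j$ with $\mu_j\le 0$, the dual graph $G^{\mu}$ contains $H_j^{\mu_j}$ with starting level $\alpha_j$ and no outgoing arrows, so the final level satisfies $\nu_j(\mu)\,\alpha\ge\alpha_j$. If $H_i$ has no gatekeeper then $\alpha_i>0$, so any $\mu_i\le 0$ small enough that $p_i(\mu_i)\le\alpha_i$ is a valid candidate in (\ref{eq_isci}); hence $L_i>-\infty$. If instead $\alpha_i=0$ but some gatekeeper $H_k$ satisfies $L_k>0$, I would trace the rejection order in $G^L$: after relabelling via property~1 in the proof of Proposition~\ref{prop_nu}, one may reject $H_k$ first. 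The rescaled transition weights out of $H_k$ in $G^L$ are of the form $g_{k\ell}(1-q^{L_k})$, which are strictly positive since $q\in(0,1)$ and $L_k>0$. An induction along the original gatekeeper chain from $H_k$ to $H_i$, together with the update rule (\ref{update}), then delivers strictly positive mass to $H_i$, so $\nu_i(L)>0$ and $L_i>-\infty$.

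For part~(b), fix a common starting vector $\mu^{(0)}$ (e.g.\ as in Remark~\ref{rem_startvalue}) for the two data sets $X$ and $X'$. The map $x\mapsto p_j(x)\,q^{-(x\vee 0)}$ is strictly increasing from $0$ to $\infty$, so the unique solution $\mu_j^{(k+1)}$ of (\ref{eq_step}) is a non-increasing function of the p-value at that point. Under the hypotheses of Definition~\ref{def_informative}(b) and Remark~\ref{rem_gatekeeping}(iii), $p_i(\cdot;X')<p_i(\cdot;X)$ pointwise while $p_j(\cdot;X')\le p_j(\cdot;X)$ for $j\ne i$. Combining this with the component-wise monotonicity of $\nu_j$ from Proposition~\ref{prop_nu}, a straightforward induction yields $\mu^{(k)}(X')\ge\mu^{(k)}(X)$ component-wise for every $k$, and passing to the limit gives $L(X')\ge L(X)$.

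To upgrade the $i$-th inequality to a strict one under $L_i(X)>-\infty$, I would argue by contradiction. Note first that $L_i(X)>-\infty$ forces $\nu_i(L(X))>0$ (otherwise the fixed-point equation would demand $p_i(L_i(X);X)=0$, hence $L_i(X)=-\infty$). Assuming $L_i(X')=L_i(X)=:\ell$, the fixed-point equations at $L(X)$ and $L(X')$ yield
\begin{equation*}
p_i(\ell;X')\,q^{-(\ell\vee 0)} \;=\; \nu_i(L(X'))\,\alpha \;\ge\; \nu_i(L(X))\,\alpha \;=\; p_i(\ell;X)\,q^{-(\ell\vee 0)},
\end{equation*}
the middle inequality by the monotonicity of $\nu_i$. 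This forces $p_i(\ell;X')\ge p_i(\ell;X)$, contradicting the strict decrease of the shifted p-values under stronger evidence. The case $L_i(X)=-\infty$ only requires $L_i(X')\ge L_i(X)$, which is immediate from the weak monotonicity above. The main obstacle I expect is the graph-theoretic bookkeeping in part~(a): one must argue carefully that in the dual graph $G^L$ the gatekeeper chain from $H_k$ to $H_i$ is not destroyed by the modification and that the update rule (\ref{update}) preserves strictly positive transition weights along the whole chain, so that $\nu_i(L)>0$ really follows.
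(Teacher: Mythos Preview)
Your proposal is correct and close in spirit to the paper's proof, but the execution in part~(b) differs in an interesting way.

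For part~(a), both you and the paper rely on the (informally defined) gatekeeper structure to argue that positive level eventually reaches $H_i$. The paper argues at the iterate level: once $\mu_j^{(k)}>0$ for some gatekeeper $H_j$, the graph $G^{\mu^{(k)}}$ passes positive level to $H_i^{\mu_i^{(k)}}$, so $\nu_i(\mu^{(k)})>0$ and the next iterate is finite. You instead work directly at the limit $L$ via the fixed-point equation. Your version is slightly more delicate because computing $\nu_i(L)$ formally requires knowing $L_i$; this is harmless since $\nu_i(\mu)$ depends on $\mu$ only through $(\mu_1\vee 0,\ldots,\mu_m\vee 0)$, but you should state this explicitly. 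The paper's iterate-based phrasing sidesteps the issue. Both arguments are equally informal about what ``gatekeeper chain'' means, and you rightly flag this as the point needing care.

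For part~(b), the paper takes a more direct route: it uses $L(X)$ itself as a starting value for the $X'$-algorithm (valid because the p-values only decreased), observes that the very first step already yields $\mu_i^{(1)}>L_i(X)$ by the strict decrease of $p_i$, and concludes $L_i(X')\ge\mu_i^{(1)}>L_i(X)$ since the algorithm is non-decreasing and converges to $L(X')$. You instead first establish weak monotonicity $L(X')\ge L(X)$ by a parallel-iteration induction, and then obtain the strict inequality by contradiction at the fixed point, exploiting the limit identity $p_i(L_i)q^{-(L_i\vee 0)}=\nu_i(L)\alpha$ (which is equation~(B.3) in the appendix). Your contradiction argument is clean and perfectly valid; the paper's approach simply merges the weak and strict parts into one step. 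One small point: the starting value of Remark~\ref{rem_startvalue} depends on the data, so ``common starting vector'' should be read as taking the $X$-starting value (which, as you can check, also satisfies \eqref{eq_astart} for $X'$).
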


\begin{proof}
We start showing (a) of Definition~\ref{def_informative}. 
As discussed in Remark~\ref{rem_startvalue}, there exists a~starting value $\mu_j^{(0)} > -\infty$ if $\alpha_j >0$, hence $L_j > -\infty$ 
by Theorem~\ref{thm_grenzwert}.

If $\alpha_i = 0$, the starting value was set to $\mu_i^{(0)}=-\infty$. 
If for some $k\ge 1$ and $j\not=i$, a positive level is shifted from $H_j$ to $H_i^{\mu_i^{(k)}}$ in the graph~$G^{\mu^{(k)}}$, then $\alpha_i^{\mu^{(k)}}=\nu_i(\mu^{(k)})\alpha>0$. By solving (\ref{eq_step}) in Algorithm~\ref{alg_ki} we obtain $\mu^{(k+1)}_i>-\infty$ and by (a) of Theorem~\ref{thm_grenzwert} that also $L_i>-\infty$. Hence, whenever at least one $H_j$, $j\not=i$, is rejected and, as a consequence, positive level is passed to $H_i$, then we will obtain $L_j>-\infty$.
This shows property (a).

We consider now property (b). Let $L$ be the vector of confidence bounds defined in (\ref{eq_isci}) for an arbitrary, but given data set. By Definition~\ref{def_iSCI} and Proposition~\ref{prop_nu} we have 
$p_j(L_j) \le \alpha_j^L = q^{L_j\wedge 0}\nu_j(L)\alpha$ for all $j=1,\ldots,m$. Assume now that, by a change in the data, the evidence against $H_i$ is increased while it remains constant or also increases for all $H_j$ with $j\not=i$. By our assumptions on the individual shifted p-values, this implies that $p_i(L_i)$ is strictly decreased while all other $p_j(L_j)$ remain constant or are decreased as well. Hence, with the new data, we obtain $p_i(L_i) < \alpha_i^L = q^{L_i\wedge 0}\nu_i(L)\alpha$ and $p_j(L_j) \le \alpha_j^L = q^{L_j\wedge 0}\nu_j(L)\alpha$ for all $j\not= i$.
As a consequence, $\mu^{(0)}=L$ can serve as starting point for the calculation of the new confidence bounds $L'$ for the changed data with Algorithm~\ref{alg_ki}. According to (\ref{eq_step}) and the monotonicity assumption of $p_i(\mu_i)$ in $\mu_i$, the first step of the algorithm results in $\mu^{(1)}_i>\mu^{(0)}_i=L_i$, and by (a) of Theorem~\ref{thm_grenzwert}, the new confidence bound $L'_i$ must be strictly larger than $L_i$. Therefore, $L_i$ increases with increasing evidence against $H_i$ when the evidence against the other $H_j$ remains the same or increases as well.
\end{proof}

To assure the informativeness of the SCIs, we pay a~price in terms of a~slightly reduced expected number of rejections to the underlying original graphical procedure. We can, however, control the desired power by the choice of the information weight~$q\in(0,1)$. Larger values of~$q$ are in favor of sharper confidence bounds, while smaller values of~$q$ lead to more rejections. This effect will be illustrated in more detail for some examples in Section~\ref{sec_examples}. The boundary case $q=1$ yields the weighted Bonferroni intervals with weights~$\alpha_1,\ldots,\alpha_m$.

One can also generalize the approach by choosing individual information weights $q_1,\ldots,q_m\in(0,1)$, depending on the importance of a~large bound~$L_j$ compared to rejecting as many hypotheses as possible. The method may even be applied with any positive, non-increasing, continuous function~$Q_j(\mu_j)$ (replacing $q^{\mu_j\vee 0}$) for $j=1,\ldots,m$ that is equal to~$1$ for $\mu_j\le 0$ and tends to~$0$ for $\mu_j\to\infty$. All arguments concerning the construction and properties of the SCIs work for functions~$Q_j(\mu_j)$ with these characteristics, as well.


\begin{remark} \label{rem_extension}
Assumption~\ref{ass_complete} can be weakened so that the approach is also applicable for graphical procedures with non-complete graphs. We only need to adapt the transition weights in~$G^\mu$ for arrows from $H_j$ to $H_j^{\mu_j}$ if $\mu_j > 0$. Instead of~$q^{\mu_j}$, we define the weight
$$
1 - (1-q^{\mu_j})\sum_{i=1}^m g_{ji},
$$
which reduces to $q^{\mu_j}$ if the original graph is complete. It can be easily seen that~$G^\mu$ is a~complete graph so that \eqref{eq_alpha} is satisfied. The expression~$q^{-\mu_j\vee 0}$ in Algorithm~\ref{alg_ki} has then to be modified accordingly. As stated above, all arguments concerning the properties of the resulting SCIs remain valid with this modified weight function.

Alternatively to this modification, one could of course also add arrows in the original graph so that Assumption~\ref{ass_complete} is satisfied. This can be done if one wants to increase power for certain hypotheses rather than improve confidence assertions. However, the latter is not always possible, for example in hierarchical testing.
\end{remark}

\section{Examples} \label{sec_examples}

\subsection{Bonferroni-Holm procedure} \label{sec_holm}

The graph in Figure~\ref{fig_graphs}(a) represents the weighted Bonferroni-Holm procedure for three hypotheses. If $g_{jk}=1/2$ for $j,k=1,2,3$, $j\ne k$, then the unweighted Holm test evolves. A~generalization for $m$ instead of $3$~hypotheses is straightforward. The simultaneous confidence intervals of \citealp{SB08} compatible with this procedure have the drawback that the bounds for rejected hypotheses are only informative if all hypotheses are rejected. Simple SCIs are possible for the Bonferroni procedure which, however, is not as powerful as the Holm procedure. The SCIs presented here propose a~compromise, which is more powerful than Bonferroni and produces always informative confidence bounds. 

The same features hold for the approach of penalized simultaneous confidence intervals introduced in \citealp{BS14} where attention is restricted to the unweighted Holm procedure and specific union intersection tests. The penalized SCIs for the Holm test are constructed via dual weighted Bonferroni tests, where the weights are based on a~so-called ``penalization function''. The penalization function $\lambda_i(\mu_i) = \exp(a \mu_i\vee 0)$ was proposed, with $a>0$ as adjusting parameter for the interpolation between the importance of sharp confidence bounds ($a=0$ corresponds to Bonferroni) and high power ($a\to\infty$ corresponds to Holm). The information weight~$q\in(0,1)$ of the new approach has a~similar meaning as $-\log(a)$. Due to the different weighting scheme, the penalized SCIs are not exactly equal to the intervals introduced here. However, their properties are closely related, as illustrated by a simulation next.

The scenario of the simulation is the following: Five null hypotheses (e.g., the effects of different treatments compared to placebo) are tested with equal weights. We assume a~scenario where all true effects are equally large. Other scenarios led to similar results. The significance level is $\alpha=2.5\%$ and the standard errors in (\ref{eq_spval}) are assumed to be $1$. We further assume that the study is powered such that the probability to reject any individual hypothesis at significance level~$\alpha/5$ is $80\%$. We made $10000$ simulation runs and determined the confidence bounds for both approaches with several values of the parameters~$q$ and~$a$, respectively. Figure~\ref{fig_penalized} compares the two procedures with respect to their trade-off between the mean confidence bound -- which is high for larger information weight~$q$ resp.\ smaller~$a$, and the average number of rejected hypotheses -- which is high for smaller~$q$ resp.\ larger~$a$. We see that the trade-off curves are almost the same for both approaches.

\begin{figure}
\begin{center}
			\includegraphics[scale=0.8]{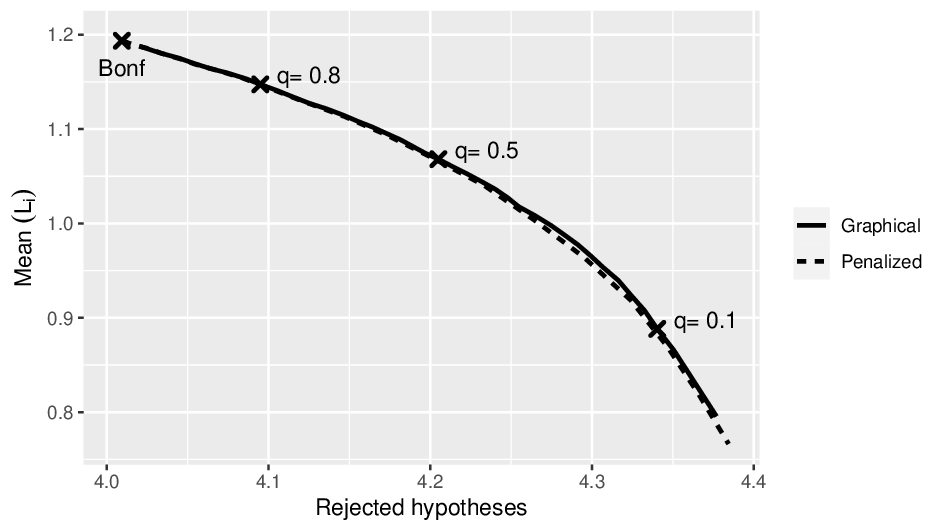}
\end{center}
      \caption{Trade-off between mean confidence bound~$L_i$ (if $H_i$ is rejected) and average number of rejected hypotheses. Comparison between the new SCIs for graphical procedures and the penalized SCIs with exponential penalizing function in dependence of different values of the information weight~$q$ resp.~$a$. The true effects are $\theta_i=c$, $i=1,\ldots,5$, with $c>0$ such that the Bonferroni test has Power $80\%$ for each separate hypothesis at significance level~$\alpha=0.025$. The value~$q=1$ corresponds to the Bonferroni intervals.}
  \label{fig_penalized}
\end{figure}

\subsection{Fixed sequence and fallback procedure}

In \citealp{SB14,SB15}, we have introduced informative simultaneous confidence intervals for the hierarchical and the fallback procedure, respectively. We show in the Appendix that Algorithm~\ref{alg_ki} produces exactly the same SCIs. We have discussed properties of these SCIs in our previous works. In particular, the SCIs are always informative. This holds for all graphical informative SCIs, as we have shown in Proposition~\ref{prop_informative}. The case of gatekeepers discussed in Definition~\ref{def_informative} and Remark~\ref{rem_startvalue} is of particular relevance for the fixed sequence: If some $H_i$ with $i < j$ is accepted, then no level is shifted to test~$H_j$ and hence $L_j = -\infty$.
 
A~nice feature of the informative SCIs is that they somehow respect the ordering in the hierarchical and the fallback procedure, in the sense that there is no power loss compared to the original procedure for the first hypothesis, which is normally the most important one. The price that has to be paid for more information is a slight power loss for $H_2,\ldots,H_m$. As for the more general informative SCIs introduced here, one has some control over the information--power trade-off by the choice of the information weight~$q$ (see Section~\ref{sec_properties}).

\subsection{A clinical trial example}  \label{sec_example}

We consider now a hypothetical clinical trial example that is in line with the RELY trial, reported in \citealp{C09}. In this trial two doses of the thrombin inhibitor dabigatran were compared to warfarin (active control) in a randomized and semi-blinded, multi-arm clinical trial. The primary treatment goal is the risk reduction of strokes or systematic embolisms in patients with atrial fibrillation. There are a primary efficacy and a safety parameter (both hazard ratios). The data indicated that the lower dabigatran dose is non-inferior and the higher dose even superior to walfarin with regard to efficacy (i.e.\ hazard for a stroke or systematic embolism), and that both dabigatran doses seem to be superior to walfarin with regard to safety (hazard for major bleeding). Since the multiplicity adjustments anticipated in the trial were only with regard to the two non-inferiority null hypotheses in efficacy, superiority claims with regard to efficacy and claims on safety are not strictly confirmative. Informative simultaneous confidence intervals based on as graph as in Figure~\ref{fig_graph},
but with two doses instead of three, permits strictly confirmative claims also with regard to superiority in the efficacy and safety endpoints. We will illustrate below how the study could have been planned to include such a procedure. We will assume, for simplicity, that for the estimates for efficacy and safety are normally distributed with known standard errors (which is in line with the common asymptotic approximation for the log hazard rates). 

As indicated in Figure~\ref{fig_graph}, effectiveness of the doses is primarily tested by non-inferiority. Accordingly, the hypotheses are $H_{E_j} = H_j^{-\delta_j}:\theta_{E_j}\le - \delta_j$ with given non-inferiority margins $\delta_j > 0$ for $j=1,2$, where $\theta_{E_j}$ are the efficacy parameters. Here $j=1$ represents the low dose treatment and $j=2$ the high dose treatment. After non-inferiority of a~treatment $j$ has been shown, its safety is investigated by the~superiority hypothesis $H_{S_j}$, $j=1,2$, for major bleeding. 

Of course, it would be valuable to also show superiority over the active comparator for the efficacy endpoint if the effect estimate is large. To this end, the simultaneous confidence bounds would need to be informative, such that $L_j > 0$ is a possibility whenever non-inferiority has been shown.
With the new approach, we obtain informative confidence bounds for all hypotheses and thus also have the chance to prove superiority. In contrast, the SCIs of \citealp{SB08} would give e.g.\ $L_1 = -\delta_1$ if $H_{E_1}$ is rejected but not $H_{E_2}$. 

We now discuss how the trial could have been planned with our informative simultaneous confidence intervals. To this end we fix the one-sided overall level $\alpha=0.025$ and assign the two non-inferiority null hypotheses the initial levels $\alpha_1=\alpha_2=\alpha/2=0.0125$. The non-inferiority margin for the log-hazard rate is chosen, like in the RELY study, as $\delta_1=\delta_2=\delta_n:=\log(1.46)=0.378$ for both doses. For a power of $80\%$ (with the initial levels) we need to recruit patients until the total information $I_n=(z_{\alpha_j}+z_\beta)^2/\delta_n^2=66.37$ is reached for the efficacy endpoint, where $z_u$ is the $(1-u)$-quantile of the standard normal distribution. Assuming that superiority in efficacy is powered at $\delta_e=1.3\delta_n=0.492$ (compare \citealp{BS14}), we can assign the smaller level $\alpha_e=1-\Phi\big(\delta_e I_n - z_\beta\big)=0.00077$ to achieve the power $80\%$ with the information $I_n$. From this we derive the corresponding $q=q_{E_j}$ for the calculation of the SCIs for the hypotheses $H_{E_j}$, $j=1,2$. We have:
$$\alpha_{E_j}=q_{E_j}^{0+\delta_n}\alpha/2\Leftrightarrow q\approx 0.00063.$$ With this choice of $q_{E_j}$ we ensure that $L_{E_j}\geq 0$ with probability $80\%$ for $\theta_{E_j}=\delta_e$.
The values of the information weight $q$ for the two safety hypotheses $H_{S_j}$, $j=1,2$ remain to be chosen. We want to explore the behavior of the SCI bounds in dependence of these parameters in a simulation. Given the symmetry of the study design, an equal choice of the information weight~$q$ for both dose groups seems reasonable. Gaining knowledge of the size of the effect in the safety endpoints is clearly most important, in case of no superiority in the efficacy endpoints. This leads to the scenario where $\theta_{E_1}=\theta_{E_2}=0$, $\theta_{S_1}=\theta_{S_2}=\delta_e=0.492$. We want to investigate the influence of $q_{S_1}=q_{S_2}$ and vary it over $(0,1)$. For each choice of $q_{S_1}=q_{S_2}$ we simulated 100.000 trials with the above parameters. For each of the simulation replicas, we calculated the informative SCIs as well as the compatible SCIs by \citealp{SB08}. Since in \citealp{C09} no correlation between the test statistics for $H_{E_j}$ and $H_{S_j}$, $j=1,2$ was given, we set this to 0. Following \citealp{DSG11}, the correlation between the test statistics for $H_{E_j}$, $j=1,2$ as well as the correlation between the tests statistics for $H_{S_j}$, $j=1,2$ were each set to $1/2$. The resulting relation between the probability to reject $H_{S_j}$ and the magnitude of the confidence bound $L_{S_j}$ is shown in Figure~\ref{fig_simout}.

\begin{figure}%
\begin{center}
\includegraphics[scale=0.8]{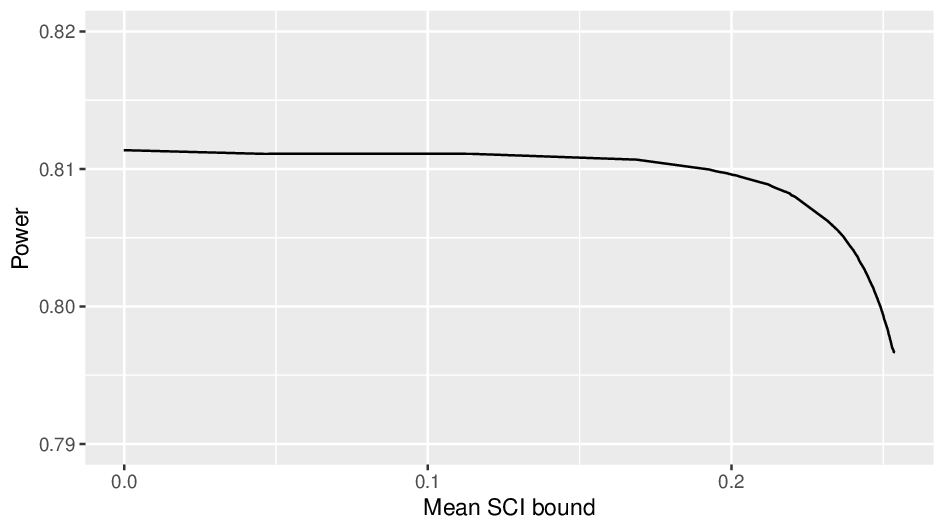}%
\end{center}
\caption{Probability to reject $H_{S_j}$ over the mean informative SCI bound for $H_{S_j}$ based on the simulation to calibrate the information weight~$q_{S_j}$}%
\label{fig_simout}%
\end{figure}

For the simulation we used our developed \textsf{R}-package \emph{informativeSCI} which is available from \textsf{CRAN} or \textsf{github}. The package can be used to run Algorithm~\ref{alg_ki}, i.e. for the calculation of the lower informative confidence bounds. Additionally it can be used to explore the behavior of the SCI bounds in dependence of the information weights by performing simulations.

From the simulation results we can do the trade-off between higher power and higher information of the SCIs. For example we could choose that value of the information weight~$q$ that ensures, that the probability to reject $H_{S_j}$ with the informative SCIs is still $81\%$ (which amounts to $q_{S_j}=10^{-10}$ in this setup) or $80\%$ (which amounts to $q_{S_j}=0.38$ in this setup). For planning a specific clinical trial, one would then investigate the power and magnitude of the SCI bounds for those values of $q$ under different scenarios and determine the value of $q_{S_j}$ to be used in the trial based on the overall performance of the SCIs in those scenarios. Results of such considerations are given in Table~\ref{tab_simres_scenarios} in the appendix. We observe, that the loss in power compared to the compatible SCIs is rather small in all scenarios, where the information gain can be quite substantial. The choice of $q=10^{-10}$ seems to be favorable, given that the power loss for $q=0.38$ amounts to up to $4\%$, depending on the scenario.

\section{Discussion} \label{sec_discussion}

The approach of informative graphical SCIs proposes a~way out of the conflict between wishing to reject as many hypotheses as possible and at the same time obtaining relevant information on the parameters of interest by confidence intervals. The SCIs can be constructed for all graphical procedures from \citealp{B09} and will always give more information than the pure (non-)rejection of the hypotheses, except in the case where gatekeepers are not rejected and thus the hypothesis test and interval estimation of the parameter is considered to be of no interest. As usual, there is no gain without costs. Compared to the original graphical procedures, the informative SCIs usually pay a~prize in terms of rejections. The~possibility to adapt the SCIs to trial specific priorities concerning information and/or power is via the choice of the information weight~$q$. High values of~$q$ (i.e., close to~$1$) give more informative SCIs, while small values (close to~$0$) lead to a~higher probability to reject more hypotheses. The choice of the information weight~$q$ can also be made separately for each hypothesis. An example how~$q$ can be determined in a~clinical trial has been given in Section~\ref{sec_example}.

\paragraph{Extension to $\varepsilon$-Graphs.} So far, we have not explicitly considered graphs that use the $\varepsilon$-notation introduced in \citealp{B09}, which makes it possible to shift level between families of hypotheses. For example, one may define a~procedure where the significance level~$\alpha$ is first shifted between~$H_1$ and $H_2$, and only if both hypotheses are rejected, the level is passed to~$H_3$ (see Figure~9 in \citealp{B09}). When applying our procedure to this graph, the modified graph contains the shifted hypotheses $H_1^{\mu_1}$ and~$H_2^{\mu_2}$. Since these hypotheses remain after rejecting the null hypotheses, the level kept for $H_1$ and~$H_2$ will never be shifted to~$H_3$. Hence, the resulting SCIs will always be the same as if $\varepsilon$ equals zero. This means that the construction of SCIs comes at the cost of not being able to exploit the improvements in power that the introduction of $\varepsilon$-edges may yield. It is certainly possible to extenuate this effect by modifying the graph~$G^\mu$ appropriately.

\paragraph{Restrictions and outlook.} As we have seen in Section~\ref{sec_holm}, the new method is more flexible than the penalized intervals from \citealp{BS14} in that all graphical procedures of \citealp{B09} can be adapted to informative graphical SCIs. However, it has to be recognized that an advantage of our previous approach was that it could be generalized to other step-down tests like the Dunnett procedure, which accounts for correlations between the test statistics and thus gains in power. A~future aspect of work may therefore be to incorporate correlations also in the informative graphical SCIs, similar to the work of \citealp{B11} for the graphical procedure.

\paragraph{Acknowledgement.} We thank Serhat G\"unay for his help in the development of R programs for our simulation study. This research was supported by the DFG, grant BR 3737/1-1.

\bibliographystyle{apalike}
\bibliography{main} 

\appendix
\section{Proof of \eqref{eq_alpha}}
We will explain that \eqref{eq_alpha} holds for any starting graph~$G$. First, as argued in the main article, the modified graph~$G^\mu$ is complete with local levels summing up to~$\alpha$ by construction and because of Assumption~1. How can level get lost while rejecting the null hypotheses in~$G^\mu$? The answer can be derived from~\eqref{update}. If all updated transition weights are computed by the first case of~\eqref{update}, then no level gets lost. Indeed, it can be easily calculated that $\sum_{l=1}^m \alpha_l^{\mathrm{new}} = \alpha$ and $\sum_{l\ne j}g_{jl}^{\mathrm{new}} = 1$ if $\sum_{l=1}^m \alpha_l = \alpha$ and $\sum_{l\ne j}g_{jl} = 1$. The only way to loose level is a~situation where $H_i$ is rejected and for some~$j$ the updated transition weights $g_{jl}^{\mathrm{new}}$ are given by the second case of~\eqref{update}, i.e., equal zero, because $g_{ji}g_{ij} = 1$. Intuitively, $H_i$ and $H_j$ form a~loop spending all level to each other without connection to the other hypotheses. If then both $H_i$ and $H_j$ are rejected, their level cannot be transferred further. 

In the modified graph~$G^\mu$, however, every hypothesis~$H_i$ is connected to its shifted hypothesis~$H_i^{\mu_i}$ with the positive transition weight~$q^{\mu_i}$. Since the shifted hypotheses are not rejected, the above situation cannot occur. If two null hypotheses form a~loop with no connection to another null hypothesis and both are rejected, then all remaining level will be transferred to their respective shifted hypotheses. Therefore, no level can get lost and \eqref{eq_alpha} is satisfied.

\section{Proof of Theorem~\ref{thm_grenzwert}}
In this section, we show that the limiting values of Algorithm~\ref{alg_ki} are equal to the confidence bounds defined in Section~\ref{sec_definition}. We start by deriving important properties of the limiting values that are needed for the proof (Section~\ref{sec_properties}). In Sections \ref{sec_proofa}--\ref{sec_proofc}, we prove the three parts of Theorem~\ref{thm_grenzwert}, which together yield the equivalence. 

\subsection{Important properties of $\mu^{\infty}$} \label{sec_properties}

We first remind that the individual p-values $p_j$ are continuous by assumption and the weight functions~$\nu_j$ are continuous by Proposition~\ref{prop_nu}. Because Equation~\eqref{eq_step} holds for all~$k$ and $\mu^{(k)} \to \mu^{\infty}$, we obtain the equality 
$$
p_j(\mu_j^{\infty})q^{-(\mu_j^{\infty}\vee 0)} = \nu_j(\mu^{\infty})\alpha \quad \text{for all } j=1,\ldots,m.
$$

By Proposition~\ref{prop_nu}, it follows that
\begin{equation} \label{eq_muinfty}
p_j(\mu_j^{\infty}) = \alpha_j^{\mu^{\infty}}\quad \text{for all } j=1,\ldots,m.
\end{equation}
We reject an intersection hypothesis $H^{\mu}$ if and only if one of its components $H_j^{\mu_j}$ is rejected at local level~$\alpha_j^{\mu}$. Therefore, the adjusted p-value for $H^{\mu}$ is
\begin{equation} \label{eq_wj}
p(\mu) = \min_{\substack{j=1,\ldots,m\\ w_j(\mu)>0}}\frac{p_j(\mu_j)}{w_j(\mu)}, \quad \text{where }
w_j(\mu) = \frac{\alpha_j^{\mu}}{\alpha} = q^{\mu_j\vee 0}\nu_j(\mu).
\end{equation}
By \eqref{eq_muinfty}, 
\begin{equation} \label{eq_setM}
\mu^{\infty} \in M:=\Big\{\mu\in\rk|\exists u\in (0,1): p_j(\mu_j)=u\cdot w_j(\mu),~\forall j=1,...,m\Big\}
\end{equation}
and
\begin{equation} \label{eq_peqalpha}
p(\mu^{\infty}) = \alpha.
\end{equation}

\subsection{Proof of Part (a) of Theorem~\ref{thm_grenzwert}} \label{sec_proofa}

In the following, we use the notation $x \le y$ if $x,y\in\rk$ for the component-wise ordering, i.e. $x_j\le y_j$ for $j=1,\ldots,m$. Let $\mu^{(0)}\le L$. Theorem~\ref{thm_grenzwert}(a) states that this implies $\mu^{(k)}\le L$ for all $k\in\nn$. We will show this assertion by induction.

Assume that $\mu^{(k)} \le L$ for some $k\in\nn_0$. For $\delta\in\rk$, we define the half-space
\begin{equation*}
H(\delta):=\bigcup_{j=1}^m\{\mu\in\rk:\mu_j \le \delta_j\}.
\end{equation*}
Because the SCI $\bigtimes _{j=1}^m (L_j,\infty)$ contains the confidence set 
\begin{equation} \label{confset}
C=\{\mu\in\rk:H^{\mu} \text{ is not rejected}\}=\{\mu\in\rk:p(\mu) > \alpha\},
\end{equation}
it becomes clear that
\begin{equation}
 \label{eq_halbraum}
\delta_j\le L_j \text{ for } j=1,\ldots,m \iff p(\mu) \le \alpha \text{ for all } \mu \in H(\delta).
\end{equation}

We have argued in the main article that Algorithm~\ref{alg_ki} produces a~non-decreasing sequence, i.e., $\mu^{(k+1)} \ge \mu^{(k)}$, hence we have $H(\mu^{(k)})\subseteq H(\mu^{(k+1)})$. By induction assumption and \eqref{eq_halbraum}, we conclude that $p(\mu)\le\alpha$ for all $\mu\in H(\mu^{(k)})$ and have to show the same property for all values in the difference set
\begin{align*}
R_k &= H(\mu^{(k+1)})\setminus H(\mu^{(k)}) \\
  &= \bigcup_{j=1}^m\{\mu\in\rk: \mu > \mu^{(k)} \text{ and } \mu_j \le \mu_j^{(k+1)}\}.
\end{align*}
Let $\mu\in R_k$. Then $\mu > \mu^{(k)}$ and there exists a~$j$ such that $\mu_j \le \mu_j^{(k+1)}$. By monotonicity of~$p_j$ and~\eqref{eq_step},
$$
p_j(\mu_j)q^{-(\mu_j\vee 0)} \le p_j(\mu_j^{(k+1)})q^{-(\mu_j^{(k+1)}\vee 0)} = \nu_j(\mu^{(k)})\alpha
\le \nu_j(\mu)\alpha.
$$ 
Hence, $p(\mu)\le\alpha$, which concludes the induction.

\subsection{Proof of Part (b) of Theorem~\ref{thm_grenzwert}} \label{sec_proofb}

To show that the limit value $\mu^{\infty}$ of Algoritm~\ref{alg_ki} is independent of the starting value~$\mu^{(0)}$, we need the following lemma.

\begin{lemma} \label{lemma_summe}
Let $\mu\in M$ with $M$ defined in \eqref{eq_setM}. Then
$$
p(\mu) = \sum_{j=1}^m p_j(\mu_j).
$$
\end{lemma}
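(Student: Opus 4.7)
The plan is straightforward and the statement is essentially an unwinding of definitions. For $\mu\in M$, by definition there exists some $u\in(0,1)$ for which the equality $p_j(\mu_j)=u\cdot w_j(\mu)$ holds simultaneously for all $j=1,\ldots,m$. The key observation is that this common $u$ is forced to equal both sides of the claimed identity, for two independent reasons.

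First I would compute $p(\mu)$. Since $p_j(\mu_j)/w_j(\mu)=u$ for every $j$ with $w_j(\mu)>0$, the minimum in the definition \eqref{eq_wj} of $p(\mu)$ collapses to the constant value $u$, i.e. $p(\mu)=u$. (For indices with $w_j(\mu)=0$, the defining relation forces $p_j(\mu_j)=0$ as well, so these terms contribute $0$ to any sum and are harmlessly excluded from the minimum.)

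Next I would sum the relations $p_j(\mu_j)=u\cdot w_j(\mu)$ over $j$. Using \eqref{eq_wj} together with \eqref{eq_alpha} from Section~\ref{sec_definition}, which tells us that the local levels $\alpha_j^{\mu}$ sum to $\alpha$, the weights satisfy
\begin{equation*}
\sum_{j=1}^m w_j(\mu)=\frac{1}{\alpha}\sum_{j=1}^m \alpha_j^{\mu}=1.
\end{equation*}
Therefore $\sum_{j=1}^m p_j(\mu_j)=u\sum_{j=1}^m w_j(\mu)=u$, and combining with the first step gives $p(\mu)=u=\sum_{j=1}^m p_j(\mu_j)$, as claimed.

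There is no real obstacle here; the only point one has to be slightly careful about is the treatment of indices with $w_j(\mu)=0$, which must also satisfy $p_j(\mu_j)=0$ by the definition of $M$ and hence do not disturb either the minimum defining $p(\mu)$ or the sum. Everything else follows immediately from the constancy of the ratios on $M$ and the fact that the weights derived from the dual graph sum to one, which is exactly Assumption~\ref{ass_complete} propagated through Proposition~\ref{prop_nu}.
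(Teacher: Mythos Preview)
Your proof is correct and follows essentially the same approach as the paper's own argument: both observe that on $M$ all ratios $p_j(\mu_j)/w_j(\mu)$ collapse to the common value $u=p(\mu)$, and then use $\sum_j w_j(\mu)=1$ (which is precisely \eqref{eq_alpha} rewritten via $w_j=\alpha_j^\mu/\alpha$) together with the fact that indices with $w_j(\mu)=0$ contribute nothing. The only cosmetic difference is that you explicitly cite \eqref{eq_alpha} for the weight normalization, whereas the paper leaves this implicit in the final equality of its displayed chain.
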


\begin{proof}
Let $\mu\in M$ with $M$ defined in \eqref{eq_setM}. Then there is a $u\in(0,1)$ with $p_j(\mu_j)=w_j(\mu)\cdot u$ for all $j=1,...,m$. In particular, for all $j$ with $w_j(\mu)\neq 0$ we get $p_j(\mu_j)/w_j(\mu)=u$. By the definition of $p(\mu)$ in \eqref{eq_wj} we get $p_j(\mu_j)/w_j(\mu)=p(\mu)$ for all $j$ with $w_j(\mu)\neq 0$. Hence:

$$\sum_{j=1}^mp_j(\mu_j)=\sum_{\substack{j=1\\ w_j(\mu)\neq 0}}^mp_j(\mu_j)=\sum_{\substack{j=1\\ w_j(\mu)\neq 0}}^m\frac{p_j(\mu_j)}{w_j(\mu)}w_j(\mu)=p(\mu)\sum_{\substack{j=1\\ w_j(\mu)\neq 0}}^mw_j(\mu)=p(\mu).$$
\end{proof}

We now show part (b) of Theorem~\ref{thm_grenzwert}. Assume there exist $\mu,\mu^\prime\in M$ with $\mu\neq\mu^\prime$ and $p(\mu)=p(\mu^\prime)=\alpha$. We define $\tau=(\tau_1,...,\tau_m)$ by $\tau_j=\max\{\mu_j,\mu^\prime_j\}$. Since $\mu\in M$ and $\mu\leq\tau$ we have for all $j$ (w.l.o.g. let $\tau_j=\mu_j)$: 
\begin{eqnarray*}
p_j(\tau_j)=p_j(\mu_j)=w_j(\mu)\alpha=q^{\mu_j\vee 0}\nu_j(\mu)\alpha=q^{\tau_j\vee 0}\nu_j(\mu)\alpha\leq q^{\tau_j\vee 0}\nu_j(\tau)\alpha=\alpha_j^\tau
\end{eqnarray*}

Therefore $\tau$ is a valid starting value for Algorithm~\ref{alg_ki}. The algorithm then converges to some $\tau^\infty\in M$ (cf.~\eqref{eq_setM} and \eqref{eq_peqalpha}). Since $\mu\leq\tau$ and $\mu^\prime\leq\tau$ and $\mu\neq\mu^\prime$ there is at least one index $1\leq l\leq m$ with $\mu_l<\tau_l$ or $\mu^\prime_l<\tau_l$. W.l.o.g. assume that $\mu^\prime_l<\tau_l$. Since the sequence $\tau^{(k)}$ generated by Algorithm~\ref{alg_ki} is increasing in each component, we also have $\mu^\prime\leq\tau^\infty$ and $\mu^\prime_l<\tau^\infty_l$. By the strict monotonicity of the p-values we obtain $$\sum_{j=1}^mp_j(\mu^\prime_j)<\sum_{j=1}^mp_j(\tau_j^\infty).$$ With \eqref{eq_peqalpha} and Lemma~\ref{lemma_summe} it follows $$\alpha=p(\mu^\prime)=\sum_{j=1}^mp_j(\mu^\prime_j)<\sum_{j=1}^mp_j(\tau_j^\infty)=p(\tau^\infty)=\alpha.$$ Because of this contradiction, there can be at most one $\mu\in M$ with $p(\mu)=\alpha$. It can thus not depend on the starting value.

\subsection{Proof of Part (c) of Theorem~\ref{thm_grenzwert}} \label{sec_proofc}

From condition~\eqref{eq_astart} and the fact that $\nu_j(\mu)\in\mathbb R_{\geq 0}$, we see that any starting point of the algorithm is also a~starting point when the significance level is $u\ge \alpha$. We can therefore define a mapping
$$
\varphi\colon [\alpha,1) \to M,
$$
where $\varphi(u)=(\varphi_1(u),\ldots,\varphi_m(u))$ is the limit point for the algorithm with significance level~$u$. Note that this mapping does not depend on the starting point (as long as the starting point satisfies \eqref{eq_astart}), as we have shown in the previous section.

\begin{proposition} \label{prop_m}
The mapping~$\varphi$ is
\begin{enumerate}
\item [\textup{(a)}] strictly increasing, i.e., $u_1 < u_2$ implies $\varphi(u_1)\le \varphi(u_2)$ component-wise with $\varphi(u_1)\ne \varphi(u_2)$ 
\item [\textup{(b)}] continuous, i.e., for any sequence $u_n \to u$, we have $\varphi(u_n) \to \varphi(u)$.
\end{enumerate}
\end{proposition}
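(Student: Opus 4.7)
My plan is to prove (a) and (b) in sequence, both relying on the uniqueness of the solution in $M$ at a fixed level that was established in Section~\ref{sec_proofb}.

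\emph{For part (a),} I would first observe that $\varphi(u_1)$ itself can serve as a starting point for Algorithm~\ref{alg_ki} when the level is raised to $u_2>u_1$. Indeed, $\varphi(u_1)\in M$ at level $u_1$ gives $p_j(\varphi_j(u_1))=u_1\,w_j(\varphi(u_1))$ for all $j$; dividing by $q^{\varphi_j(u_1)\vee 0}$ and using $u_1<u_2$ yields the starting inequality \eqref{eq_astart} with $u_2$ in place of $\alpha$. Running the algorithm at level $u_2$ from this starting value then generates a component-wise non-decreasing sequence whose limit, by part (b) of Theorem~\ref{thm_grenzwert} transported to level $u_2$, must equal $\varphi(u_2)$. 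Hence $\varphi(u_2)\ge\varphi(u_1)$ component-wise. To obtain strict inequality, suppose $\varphi(u_1)=\varphi(u_2)=:\mu$; then $(u_1-u_2)w_j(\mu)=0$ for every $j$, forcing $w_j(\mu)=0$ for all $j$, which contradicts $\sum_j w_j(\mu)=\sum_j\alpha_j^\mu/\alpha=1$ derived from \eqref{eq_alpha}.

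\emph{For part (b),} I would combine the monotonicity from (a) with the continuity of the shifted $p$-values and of $w_j(\mu)=q^{\mu_j\vee 0}\nu_j(\mu)$, the latter being continuous by Proposition~\ref{prop_nu}. It suffices to prove one-sided continuity at each $u\in[\alpha,1)$. For right-continuity, take $u_n\downarrow u$. By (a), $(\varphi(u_n))_n$ is component-wise non-increasing and bounded below by $\varphi(u)$, hence converges to some $\mu^+\ge\varphi(u)$. Passing to the limit in the defining identity $p_j(\varphi_j(u_n))=u_n\,w_j(\varphi(u_n))$ gives $p_j(\mu^+_j)=u\,w_j(\mu^+)$ for all $j$, so $\mu^+$ is an element of $M$ at level $u$. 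The uniqueness shown in Section~\ref{sec_proofb} then forces $\mu^+=\varphi(u)$. Left-continuity is analogous (the sequence is now non-decreasing and bounded above by $\varphi(u)$), and arbitrary sequences $u_n\to u$ are handled by splitting into subsequences above and below $u$ and squeezing via monotonicity.

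\emph{Main obstacle.} The subtlety is not the monotonicity per se but verifying that the limit of $\varphi(u_n)$ again lies in $M$ so that the uniqueness argument can be invoked; this is precisely where the continuity of $\nu_j$ from Proposition~\ref{prop_nu} and of the shifted $p$-values are essential. A related bookkeeping point is that the uniqueness proof in Section~\ref{sec_proofb} was written at level $\alpha$, but its only use of $\alpha$ is as a common multiplicative factor appearing in $p_j(\mu_j)=u\,w_j(\mu)$, so the assertion extends verbatim to arbitrary $u\in(0,1)$; I would flag this reduction at the start of the proof.
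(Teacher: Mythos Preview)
Your proposal is correct and follows essentially the same route as the paper: for (a) you verify that $\varphi(u_1)$ is an admissible starting value at level $u_2$ and use monotonicity of the algorithm's iterates together with independence of the starting value, and for (b) you pass to the limit in the identity $p_j(\varphi_j(u_n))=u_n\,w_j(\varphi(u_n))$ and invoke the uniqueness from Section~\ref{sec_proofb}, exactly as the paper does. The only cosmetic difference is in the strict-inequality step of (a): the paper argues via $p(\varphi(u_i))=u_i$ (so equal limits would force $u_1=u_2$), whereas you deduce $w_j(\mu)=0$ for all $j$ and contradict $\sum_j w_j(\mu)=1$; both arguments are valid and of the same length.
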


\begin{proof}
\begin{enumerate}
\item [\textup{(a)}] If $u_1 < u_2$, then by the monotonicity of~$\nu_j$
$$
p_j(\varphi_j(u_1))q^{-(\varphi_j(u_1)\vee 0)} =  \nu_j(\varphi(u_1))u_1 
\le \nu_j(\varphi(u_1))u_2,
$$
hence $\varphi(u_1)$ is a~starting value for the algorithm at level~$u_2$, which converges to $\varphi(u_2)$. By Part~(a) of Theorem~\ref{thm_grenzwert}, $\varphi(u_1)\le \varphi(u_2)$, and by \eqref{eq_peqalpha}, $$p(\varphi(u_1))=u_1<u_2=p(\varphi(u_2)),$$  hence, $\varphi(u_1)\ne \varphi(u_2)$.
\item [\textup{(b)}] 
To show (b) we show that $\phi$ is right continuous and left continuous in every point. Let $u\in (0,1)$ and $(u_n)_{n}$ be a monotonously decreasing sequence in $(u,1)$ with $u_n\to u$ for $n\to\infty$. By (a) the sequence $(\varphi_j(u_n))_n$ is also monotonously decreasing and convergent to some finite value (if bounded) or to $-\infty$ (if not bounded). We denote the limiting value by $\varphi_j^{+}$. Then the limiting value of $(\varphi(u_n))_n$ is given by $\varphi^{+}:=(\varphi_1^{+},...,\varphi_m^{+})$. We will now show that $\varphi^+=\varphi(u)$. By the continuity of $w_j$ and $p_j$ (cf.~Proposition~\ref{prop_nu}) \begin{equation}w_j(\varphi(u_n))\overset{n\to\infty}{\longrightarrow}w_j(\varphi^+)\quad\text{and}\quad p_j(\varphi_j(u_n))\overset{n\to\infty}{\longrightarrow}p_j(\varphi_j^+).\label{eq_contwp}\end{equation} 
Because $\varphi(u_n)$ is a limiting value of Algoritm~\ref{alg_ki} and by \eqref{eq_muinfty} we have that 
$$p_j(\varphi_j(u_n))=w_j(\varphi(u_n))u_n.$$ By \eqref{eq_contwp} and the convergence of $u_n$ we obtain $$p_j(\varphi_j^+)=w_j(\varphi^+)u.$$ Hence, $\varphi^+\in M$ and additionally, by \eqref{eq_wj} $p(\varphi^+)=u$. By the argumentation in the proof of Part (b) of Theorem~\ref{thm_grenzwert} we know that there is at most one value $\mu\in M$ with $p(\mu)=u$. Therefore, $\varphi^+=\varphi(u)$ and $\varphi$ is shown to be right continuous. By a similar argumentation, $\varphi$ can also shown to be left continuous. Therefore, $\varphi$ is continuous.
\end{enumerate}
\end{proof}

We can now prove that $\mu^\infty \ge L$. Let $u_n$ be a~sequence in $(\alpha,1)$ that converges to~$\alpha$. Then $p(\varphi(u_n)) = u_n > \alpha$ and hence $\varphi(u_n)\in C$ for all~$n$, see~\eqref{confset}. By definition of~$L$, we conclude that $\varphi(u_n) \ge L$. It follows from Proposition~\ref{prop_m} that $\varphi(u_n)$ converges to $\varphi(\alpha)=\mu^{\infty}$ and therefore $\mu^{\infty} \ge L$.

\section{Equivalence of new SCIs and former approach for the fallback procedure} \label{sec_fallback}

We show in this Section that Algorithm~1 produces the same SCIs as Algorithm~1 of \cite{SB15}, where informative SCIs for the fallback procedure are proposed. Denote by $c_j=\alpha_j/\alpha$ the initial weights of the fallback procedure. The fixed-sequence test is contained as the special case, where $\alpha_1=1$ and $\alpha_j=0$ for $j=2,\ldots,m$. Therefore, also the informative SCIs for the hierarchical procedure introduced in \cite{SB14} are contained in the graphical SCI approach.

Consider Algorithm~1 with the function
\begin{equation}
\label{eq_fallback}
\nu_j(\mu) = \sum_{l=1}^{j-1}c_l\prod_{s=l}^{j-1}\big(1-q^{\mu_s\vee 0}\big)+  c_j.
\end{equation}
Further, we have to replace the term $q^{\mu_j\vee 0}$ in Algorithm~1 by the function
$$
\omega_j(\mu_j)=\begin{cases}
			 q^{\mu_j\vee 0} & j<m,\\
			 1						   & j=m,
			 \end{cases}
$$
due to the fact that the fallback procedure is not complete (see Remark~\ref{rem_extension}). One can show that $\nu_j(\mu)$ in~\eqref{eq_fallback} is indeed the function defined by Proposition~\ref{prop_nu} for the modified graph~$G^\mu$ of the fallback procedure. With this notation, Algorithm~1 is equivalent to Algorithm~1 in \cite{SB15}.

Indeed, for any starting value~$\mu^{(0)}$, we have $\nu_1(\mu^{(0)}) = c_1$ in~\eqref{eq_fallback}. Then Condition~\eqref{eq_step} for $j=1$ translates to the two cases described in Algorithm~1 of \cite{SB15}: If $p_1(0) > c_1\alpha$, then we find $\mu_1^{(1)} < 0$ such that $p_1(\mu_1^{(1)}) = c_1\alpha$. If $p_1(0) \le c_1\alpha$, then we find $\mu_1^{(1)} \geq 0$ such that $p_1(\mu_1^{(1)})q^{-\mu_1^{(1)}} = c_1\alpha$. In the second iteration step, $\nu_1(\mu^{(1)})=c_1$ remains unchanged in~\eqref{eq_fallback}. Therefore, $\mu_1^{(k)} = L_1$ will not change further. But $\nu_2(\mu^{(1)}) = c_1(1-q^{L_1\vee 0}) + c_2$, which is exactly the weight $w_2$ in the second step of Algorithm~1 in \cite{SB15}. Condition~\eqref{eq_step} for $j=2$ translates now to the two cases of this algorithm, and so on. After~$m$~steps, we obtain the informative confidence bounds for the fallback procedure defined in \cite{SB15}.

\newpage
\section{Simulation results}
\FloatBarrier
\small
\begin{center}
\begin{longtable}{c|c|c|c|cc|cc|cc}
&&&& \multicolumn{2}{|c|}{Power} & \multicolumn{2}{|c|}{Mean SCI} & \multicolumn{2}{|c}{\% Finite bounds}   \\
\hline
Scenario & Hypothesis & Ture parameter & $q$ &	ISCI & CSCI &	ISCI &CSCI & ISCI & CSCI \\

\hline
\multirow{8}{*}{1} 		& \multirow{2}{*}{$E_1$}& \multirow{2}{*}{$0$} 								& $10^{-10}$ & 0.812 &  \multirow{2}{*}{0.845} & -0.277  & \multirow{2}{*}{-0.278}  & 100 & \multirow{2}{*}{100}\\
											&  											&  																		& $0.38$ 		 & 0.801 & 												 & -0.288  & 											   & 100 & \\
											& \multirow{2}{*}{$E_2$}& \multirow{2}{*}{$0$} 								& $10^{-10}$ & 0.812 &  \multirow{2}{*}{0.845} & -0.277 & \multirow{2}{*}{-0.278}  & 100 & \multirow{2}{*}{100}\\
											&  											&  																		& $0.38$ 		 & 0.802 &												 & -0.288 & 											   & 100 & \\
											& \multirow{2}{*}{$S_1$}& \multirow{2}{*}{$\delta_e=0.492$} 	& $10^{-10}$ & 0.810 &  \multirow{2}{*}{0.845} & 0.169 & \multirow{2}{*}{0.000}   & 81.2 & \multirow{2}{*}{84.6}\\
											&  											&  																		& $0.38$ 		 & 0.800 &												 & 0.250 &												   & 80.1 & \\
											& \multirow{2}{*}{$S_2$}& \multirow{2}{*}{$\delta_e=0.492$} 	& $10^{-10}$ & 0.810 &  \multirow{2}{*}{0.846} & 0.196 & \multirow{2}{*}{0.000}  & 81.2 & \multirow{2}{*}{84.6}\\
											&  											&  																		& $0.38$ 		 & 0.800 &												 & 0.249 & 											   & 80.2 & \\
							
\hline
\multirow{8}{*}{2} 		& \multirow{2}{*}{$E_1$}& \multirow{2}{*}{$\delta_e=0.492$} 	& $10^{-10}$ & 1.000 &  \multirow{2}{*}{1.000} & 0.210 & \multirow{2}{*}{0.216}& 100 & \multirow{2}{*}{100}\\
											&  											&  																		& $0.38$ 		& 1.000 & 												& 0.151 & 											& 100 & \\
											& \multirow{2}{*}{$E_2$}& \multirow{2}{*}{$\delta_e=0.492$} 	& $10^{-10}$ & 1.000 &  \multirow{2}{*}{1.000} & 0.210 & \multirow{2}{*}{0.216}& 100 & \multirow{2}{*}{100}\\
											&  											&  																		& $0.38$ 		& 1.000 &													& 0.151 & 											& 100 & \\
											& \multirow{2}{*}{$S_1$}& \multirow{2}{*}{$\delta_e=0.492$} 	& $10^{-10}$ & 1.000 &  \multirow{2}{*}{1.000} & 0.207 & \multirow{2}{*}{0.000}& 100 & \multirow{2}{*}{100}\\
											&  											&  																		& $0.38$ 		& 1.000 &													& 0.287 &												& 100 & \\
											& \multirow{2}{*}{$S_2$}& \multirow{2}{*}{$\delta_e=0.492$} 	& $10^{-10}$ & 1.000 &  \multirow{2}{*}{1.000} & 0.207 & \multirow{2}{*}{0.000}& 100 & \multirow{2}{*}{100}\\
											&  											&  																		& $0.38$ 		& 1.000 &													& 0.287 & 											& 100 & \\
							
\hline
\multirow{8}{*}{3} 		& \multirow{2}{*}{$E_1$}& \multirow{2}{*}{$0$}							 	& $10^{-10}$ & 0.855 &  \multirow{2}{*}{0.869} & -0.252 & \multirow{2}{*}{-0.269}& 100 & \multirow{2}{*}{100}\\
											&  											&  																		& $0.38$ 		 & 0.816 & 												 & -0.282 & 											 & 100 & \\
											& \multirow{2}{*}{$E_2$}& \multirow{2}{*}{$\delta_e=0.492$} 	& $10^{-10}$ & 1.000 &  \multirow{2}{*}{1.000} & 0.165 & \multirow{2}{*}{0.151}  & 100 & \multirow{2}{*}{100}\\
											&  											&  																		& $0.38$ 		 & 1.000 &												 & 0.146 & 											   & 100 & \\
											& \multirow{2}{*}{$S_1$}& \multirow{2}{*}{$\delta_e=0.492$} 	& $10^{-10}$ & 0.853 &  \multirow{2}{*}{0.869} & 0.179 & \multirow{2}{*}{0.000}  & 85.5 & \multirow{2}{*}{86.9}\\
											&  											&  																		& $0.38$ 		 & 0.814 &												 & 0.254 &												 & 81.6 & \\
											& \multirow{2}{*}{$S_2$}& \multirow{2}{*}{$\delta_e=0.492$} 	& $10^{-10}$ & 1.000 &  \multirow{2}{*}{1.000} & 0.188 & \multirow{2}{*}{0.000}  & 100 & \multirow{2}{*}{100}\\
											&  											&  																		& $0.38$ 		 & 1.000 &												 & 0.282 & 											   & 100 & \\
							
\hline
\multirow{8}{*}{4} 		& \multirow{2}{*}{$E_1$}& \multirow{2}{*}{$0$}							 	& $10^{-10}$ & 0.797 &  \multirow{2}{*}{0.798} & -0.291 & \multirow{2}{*}{-0.389}& 100 & \multirow{2}{*}{100}\\
											&  											&  																		& $0.38$ 		 & 0.797 & 												 & -0.291 & 											 & 100 & \\
											& \multirow{2}{*}{$E_2$}& \multirow{2}{*}{$0$}							 	& $10^{-10}$ & 0.813 &  \multirow{2}{*}{0.845} & -0.282 & \multirow{2}{*}{-0.386}  & 100 & \multirow{2}{*}{100}\\
											&  											&  																		& $0.38$ 		 & 0.802 &												 & -0.288 & 											   & 100 & \\
											& \multirow{2}{*}{$S_1$}& \multirow{2}{*}{$\delta_e=0.492$} 	& $10^{-10}$ & 0.794 &  \multirow{2}{*}{0.798} & 0.162  & \multirow{2}{*}{0.000}  & 79.7 & \multirow{2}{*}{79.8}\\
											&  											&  																		& $0.38$ 		 & 0.794 &												 & 0.247  &												 & 79.7 & \\
											& \multirow{2}{*}{$S_2$}& \multirow{2}{*}{$0$}							 	& $10^{-10}$ & 0.004 &  \multirow{2}{*}{0.020} & -0.229 & \multirow{2}{*}{-0.174}  & 81.3 & \multirow{2}{*}{84.5}\\
											&  											&  																		& $0.38$ 		 & 0.003 &												 & -0.235 & 											   & 80.2 & \\		
												
\hline
\multirow{8}{*}{5} 		& \multirow{2}{*}{$E_1$}& \multirow{2}{*}{$\delta_e=0.492$} 	& $10^{-10}$ & 1.000 &  \multirow{2}{*}{1.000} & 0.144  & \multirow{2}{*}{-0.376}  & 100 & \multirow{2}{*}{100}\\
											&  											&  																		& $0.38$ 		 & 1.000 & 												 & 0.144  & 											   & 100 & \\
											& \multirow{2}{*}{$E_2$}& \multirow{2}{*}{$0$}							 	& $10^{-10}$ & 0.796 &  \multirow{2}{*}{0.798} & -0.291 & \multirow{2}{*}{-0.391}  & 100 & \multirow{2}{*}{100}\\
											&  											&  																		& $0.38$ 		 & 0.796 &												 & -0.291 & 											   & 100 & \\
											& \multirow{2}{*}{$S_1$}& \multirow{2}{*}{$0$}							 	& $10^{-10}$ & 0.010 &  \multirow{2}{*}{0.013} & -0.203 & \multirow{2}{*}{-0.195}   & 100 & \multirow{2}{*}{100}\\
											&  											&  																		& $0.38$ 		 & 0.010 &												 & -0.203 &												   & 100 & \\
											& \multirow{2}{*}{$S_2$}& \multirow{2}{*}{$0$}							 	& $10^{-10}$ & 0.003 &  \multirow{2}{*}{0.012} & -0.238 & \multirow{2}{*}{-0.195}  & 79.6 & \multirow{2}{*}{79.8}\\
											&  											&  																		& $0.38$ 		 & 0.003 &												 & -0.238 & 											   & 79.6 & \\

\hline
\multirow{8}{*}{6} 		& \multirow{2}{*}{$E_1$}& \multirow{2}{*}{$\delta_e=0.492$} 	& $10^{-10}$ & 1.000 &  \multirow{2}{*}{1.000} & 0.144  & \multirow{2}{*}{-0.376}  & 100 & \multirow{2}{*}{100}\\
											&  											&  																		& $0.38$ 		 & 1.000 & 												 & 0.144  & 											   & 100 & \\
											& \multirow{2}{*}{$E_2$}& \multirow{2}{*}{$\delta_e=0.492$} 	& $10^{-10}$ & 1.000 &  \multirow{2}{*}{1.000} & 0.144 & \multirow{2}{*}{-0.376}  & 100 & \multirow{2}{*}{100}\\
											&  											&  																		& $0.38$ 		 & 1.000 &												 & 0.144 & 											   & 100 & \\
											& \multirow{2}{*}{$S_1$}& \multirow{2}{*}{$0$}							 	& $10^{-10}$ & 0.010 &  \multirow{2}{*}{0.014} & -0.203 & \multirow{2}{*}{-0.195}   & 100 & \multirow{2}{*}{100}\\
											&  											&  																		& $0.38$ 		 & 0.010 &												 & -0.203 &												   & 100 & \\
											& \multirow{2}{*}{$S_2$}& \multirow{2}{*}{$0$}							 	& $10^{-10}$ & 0.010 &  \multirow{2}{*}{0.014} & -0.203 & \multirow{2}{*}{-0.195}  & 100 & \multirow{2}{*}{100}\\
											&  											&  																		& $0.38$ 		 & 0.010 &												 & -0.203 & 											   & 100 & \\

\hline
\multirow{8}{*}{7} 		& \multirow{2}{*}{$E_1$}& \multirow{2}{*}{$\delta_e=0.492$} 	& $10^{-10}$ & 1.000 &  \multirow{2}{*}{1.000} & 0.145  & \multirow{2}{*}{-0.363}  & 100 & \multirow{2}{*}{100}\\
											&  											&  																		& $0.38$ 		 & 1.000 & 												 & 0.144  & 											   & 100 & \\
											& \multirow{2}{*}{$E_2$}& \multirow{2}{*}{$\delta_e=0.492$} 	& $10^{-10}$ & 1.000 &  \multirow{2}{*}{1.000} & 0.162 & \multirow{2}{*}{-0.363}  & 100 & \multirow{2}{*}{100}\\
											&  											&  																		& $0.38$ 		 & 1.000 &												 & 0.151 & 											   & 100 & \\
											& \multirow{2}{*}{$S_1$}& \multirow{2}{*}{$\delta_e=0.492$} 	& $10^{-10}$ & 1.000 &  \multirow{2}{*}{1.000} & 0.180 & \multirow{2}{*}{0.000}   & 100 & \multirow{2}{*}{100}\\
											&  											&  																		& $0.38$ 		 & 1.000 &												 & 0.280 &												   & 100 & \\
											& \multirow{2}{*}{$S_2$}& \multirow{2}{*}{$0$}							 	& $10^{-10}$ & 0.018 &  \multirow{2}{*}{0.025} & -0.182 & \multirow{2}{*}{-0.171}  & 100 & \multirow{2}{*}{100}\\
											&  											&  																		& $0.38$ 		 & 0.012 &												 & -0.196 & 											   & 100 & \\				
\caption{Performance of the informative SCIs (ISCI) and compatible SCIs (CSCI) under different setups.}
\label{tab_simres_scenarios}																						
\end{longtable}
\end{center}
\end{document}